\newtheorem{definition}{Definition}
\newtheorem{observation}{Observation}
\newcommand{\Dim}{\text{Dim}}
\newenvironment{prevproof}[2]{\noindent {\em {Proof of {#1}~\ref{#2}:}}}{$\Box$\vskip \belowdisplayskip}
\newcommand{\junk}[1]{}
\newcommand{\poly}{{\rm poly}}
\newcommand{\mattnote}[1]{{\color{blue}{#1}}}
\newcommand{\notshow}[1]{{}}
\DeclareMathOperator{\E}{E}
\definecolor{MyGray}{rgb}{0.8,0.8,0.8}
\begin{document}
\title{Interpolating Between Truthful and non-Truthful Mechanisms for Combinatorial Auctions}

\author{
Mark Braverman \thanks{Department of Computer Science, Princeton University, email: mbraverm@cs.princeton.edu. Research supported in part by an NSF CAREER award (CCF-1149888), NSF CCF-1215990, a 
Turing Centenary Fellowship, a Packard Fellowship in Science and Engineering, and the Simons Collaboration on Algorithms and Geometry.}
\and
Jieming Mao  \thanks{Department of Computer Science, Princeton University, email: jiemingm@cs.princeton.edu}
\and 
S. Matthew Weinberg \thanks{Department of Computer Science, Princeton University, email: sethmw@cs.princeton.edu }
}

\addtocounter{page}{-1}
\maketitle
\begin{abstract} We study the communication complexity of combinatorial auctions via \emph{interpolation mechanisms} that interpolate between non-truthful and truthful protocols. Specifically, an interpolation mechanism has two phases. In the first phase, the bidders participate in some non-truthful protocol \emph{whose output is itself a truthful protocol}. In the second phase, the bidders participate in the truthful protocol selected during phase one. Note that virtually all existing auctions have either a non-existent first phase (and are therefore truthful mechanisms), or a non-existent second phase (and are therefore just traditional protocols, analyzed via the Price of Anarchy/Stability).

The goal of this paper is to understand the benefits of interpolation mechanisms versus truthful mechanisms or traditional protocols, and develop the necessary tools to formally study them. Interestingly, we exhibit settings where interpolation mechanisms greatly outperform the optimal traditional and truthful protocols. Yet, we also exhibit settings where interpolation mechanisms are provably no better than truthful ones. Finally, we apply our new machinery to prove that the recent single-bid mechanism of Devanur et. al.~\cite{DevanurMSW15} (the only pre-existing interpolation mechanism in the literature) achieves the optimal price of anarchy among a wide class of protocols, a claim that simply can't be addressed by appealing just to machinery from communication complexity or the study of truthful mechanisms.

\end{abstract}
\thispagestyle{empty}

\newpage

\section{Introduction}
In a combinatorial auction, a single designer has $m$ items available for purchase to $n$ buyers. Each buyer has some private valuation function $v_i(\cdot): 2^{[m]} \rightarrow \mathbb{R}^+$ over subsets of items, and the seller aims to partition the items into $S_1\sqcup\ldots\sqcup S_n$ so as to optimize the social welfare, $\sum_i v_i(S_i)$. Much recent work addresses the design of combinatorial auctions, targeting the desiderata of optimality, simplicity (from both a design and strategic perspective), and computational tractability. For instance, the celebrated Vickrey-Clarke-Groves mechansim achieves the optimal social welfare, and is \emph{truthful} (therefore it is strategically simple: no bidder need consider any strategy except for honest behavior)~\cite{Vickrey61,Clarke71,Groves73}. However, in virtually all settings of interest, the VCG mechanism is not computationally tractable, making it unusable in practice.

Much recent work of computer scientists has targeted the design of auctions that are instead approximately optimal, but computationally tractable. One active line of work searches for truthful mechanisms~\cite{LaviS05,DobzinskiNS05,DobzinskiNS06, Dobzinski07,DughmiRY11,KrystaV13}. While these results all achieve computational tractability and strategic simplicity in the strongest possible way, the mechanisms are quite involved and therefore don't achieve design simplicity. More importantly, many of these mechanisms can only guarantee approximation ratios that are polynomial in $m$. When buyers are assumed to be submodular\footnote{A valuation function $v(\cdot)$ is submodular if $v(X) + v(Y) \geq v(X \cap Y) + v(X \cup Y)$ for all $X, Y$.} or subadditive,\footnote{A valuation function $v(\cdot)$ is subadditive if $v(X) + v(Y) \geq v(X \cap Y)$ for all $X, Y$.} the best achieve ratios just logarithmic in $m$. A central open problem is the design of computationally tractable truthful mechanisms that guarantee a constant-factor approximation when valuation functions are submodular or subadditive. 

Another exciting line of work has shown simple mechanisms that achieve a low \emph{price of anarchy} at various equilibrium concepts~\cite{BhawalkarR11,PaesLemeST12,SyrgkanisT12,SyrgkanisT13,FeldmanFGL13}. These results show, for instance, that as long as buyers are subadditive and interact at equilibrium, auctioning each item simultaneously via a first-price auction achieves half the optimal social welfare~\cite{FeldmanFGL13}. All of these auctions are computationally tractable and simple in design, and many achieve approximation ratios that are very small constants, via the price of anarchy. However, \emph{none} of the equilibria at which these results hold are known to arise naturally, and some are even known to be computationally intractable~\cite{CaiP14, DobzinskiFK15}.\footnote{Some equilibrium concepts, such as a pure Nash equilibrium in simultaneous second price auctions for submodular buyers, can be found in polynomial time~\cite{LehmannLN01,ChristodoulouKS08,DobzinskiFK15}. However, the algorithms finding them are highly centralized, and the equilibria themselves are very unnatural: each item only has only one non-zero bidder, even though bidding zero on any item is possibly a dominated strategy.} Note that even distributed regret minimization may be computationally intractable in these settings, as each buyer has exponentially many (in $m$) strategies to consider. Therefore, these mechanisms are all extremely complex from a strategic perspective, as buyers would have to reason about exponentially many different strategies in order to approach an equilibrium at which good approximation guarantees hold.

So, truthful mechanisms are strategically simple but achieve poor approximation ratios, and simple mechanisms achieve good approximation guarantees but are strategically complex. As an alternative to pursuing each direction separately, we propose taking ideas from each and introduce \emph{interpolation mechanisms}. An interpolation mechanism has two phases. In the first phase, buyers participate in some non-truthful mechanism whose output is itself a truthful mechanism. In the second phase, buyers participate in the truthful mechanism selected during phase one. In this language, all truthful mechanisms are interpolation mechanisms with a non-existent first phase, and the simple mechanisms referenced above are interpolation mechanisms with a non-existent second phase. 

What might interpolation mechanisms bring to the table that truthful mechanisms and existing simple mechanisms don't? This question is best addressed with an example. Recent work of Devanur et. al.~\cite{DevanurMSW15} designs the first interpolation mechanism (although they did not consider this classification), the single-bid mechanism. Phase one of the single-bid mechanism asks each buyer to report just a single real number, $b_i$, as their bid. Phase two visits the buyers one by one in decreasing order of $b_i$, and allows the buyer to purchase any number of remaining items at $b_i$ per item (so more items are available to higher bidders, but lower bidders pay less per item). It is easy to see that once the bids are fixed and order determined in phase one, phase two constitutes a truthful mechanism. Note that phase one by itself is extremely limited: buyers are asked to represent their entire valuation function (of which there are doubly-exponentially many) with just $\log m$ bits. Unsurprisingly, no protocol using this limited amount of communication can possibly find a good allocation directly. Note also that phase two by itself is also quite limited: an ordering of the bidders along with a single price is set ahead of time, then buyers do as they please. Also unsurprisingly, such truthful mechanisms (that we call \emph{single-price} mechanisms) can't guarantee any non-trivial approximation ratio. From our perspective, the single-bid mechanism is interesting because it takes two useless mechanisms, neither of which can guarantee a sub-polynomial approximation ratio on even $0/1$-additive buyers,\footnote{A buyer is additive if they have a value $v_i$ for item $i$, and their value for a set $S$ is $\sum_i v_i$. A buyer is $0/1$-additive if each $v_i \in \{0,1\}$.} and combines them into a mechanism with a price of anarchy $O(\log m)$ at correlated equilibria when buyers are subadditive. Importantly, because the per-bidder communication in phase one is only logarithmic, each bidder can actually implement any standard regret minimization algorithm over possible bids in poly time. Therefore, the mechanism achieves design simplicity, strategic simplicity, and computational tractability. The main open problem left following their work is the design of mechanisms that achieve these three desiderata with a constant price of anarchy.

Interpolation mechanisms are a natural avenue to tackle this problem, and therefore lower bounds on their capability are important to guide their research. Following Devanur et. al.'s work, questions arose such as: what if bidders make a constant number of bids instead of just one? What if the posted prices are different for each item? What if we restrict attention to a much smaller class than subadditive bidders? What if we consider price of stability instead of anarchy? Surprisingly, a subset of our results shows that \emph{none} of these relaxations suffice to (significantly) beat the $O(\log m)$ bound attained by the single-bid mechanism. The remainder of our results show that lower bounds known for various classes of truthful mechanisms also extend to interpolation mechanisms with little first-phase communication. One should interpret these results \emph{not} as claiming that the limits of interpolation mechanisms have already been reached, but as guiding future research towards other classes of truthful mechanisms (specifically, we identify posted-price mechanisms as a natural candidate in Section~\ref{sec:discussion}). 

\subsection{Our Results}\label{sec:results}
In addition to formally identifying interpolation mechanisms as an important avenue of study, we identify their connection to the price of anarchy and stability, and provide numerous lower bounds. Our lower bounds consider interpolation mechanisms where the phase-two mechanism comes from a certain class. The goal of these lower bounds is to identify which classes of mechanisms are incompatible with interpolation (MIR and value-query, below), and for which classes of mechanisms the limits have already been reached (single-price and non-adaptive posted-price, below) to guide future research towards others (adaptive posted-price mechanisms, Section~\ref{sec:discussion}). Our results all lower bound the amount of first-phase communication necessary to find a suitable phase-two mechanism from the desired class. Note that even for truthful mechanisms, no unconditional communication lower bounds are known outside of artificial settings, so it is outside the scope of this paper to suddenly provide unconditional lower bounds in the strictly more general setting of interpolation mechanisms.

\paragraph{Price of Anarchy and Price of Stability.} Any bidder participating in an interpolation mechanism with $O(\log m)$ first-phase communication per bidder can run any standard regret minimization algorithm in poly-time. Because bidders need not strategize over their phase-two behavior, they need only optimize over their possible strategies in phase one, of which there are at most $\poly(m)$. Therefore, price of anarchy bounds for correlated equilibria of interpolation mechanisms with logarithmic first-phase communication have some extra bite, as bidders can be reasonably expected to converge to a correlated equilibria and the bound will hold. We call such mechanisms \emph{a priori learnable}, and formally define this in Section~\ref{sec:prelim}.

Because we lower bound the first-phase communication complexity, we not only lower bound the achievable price of anarchy by such mechanisms, but also the price of stability. In this context, our bounds are strong in the sense that they don't rely on equilibrium behavior of the buyers, and apply no matter how the buyers interact. Prior to this work, Roughgarden provides the only general approach for proving price of anarchy lower bounds~\cite{Roughgarden14}, and no general approach was known for price of stability at all. Our approach is similar to Roughgarden's in the sense that both identify settings in which communication lower bounds imply ``the right'' price of anarchy lower bounds. Still, our approach differs signifcantly as Roughgarden's work specifically targets equilibrium concepts that are \emph{not} efficiently computable, and doesn't apply to price of stability. We discuss formally the connection between first-phase communication bounds and price of anarchy/stability in Section~\ref{sec:prelim}. 

\paragraph{Single-Price Mechanisms.} A single-price mechanism fixes a price $p_i$ for buyer $i$, then visits the buyers one at a time and offers buyer $i$ any remaining items for $p_i$ each. Devanur et. al.'s single-bid mechanism has $O(\log m)$ first-phase communication per bidder, and obtains a price of anarchy at correlated equilibria of $O(\log m)$ whenever buyers are subadditive. We show in Section~\ref{sec:singleprice} that even when buyers are just additive, \emph{no amount of first-phase communication} suffices for an interpolation mechanism whose second phase is a single-price mechanism to obtain an approximation ratio $o(\log m/ \log \log m)$. Note that this \emph{significantly} improves a lower bound shown in~\cite{DevanurMSW15}, which simply proved that the single-bid mechanism itself could not guarantee an approximation ratio $o(\log m / \log \log m)$. 

\paragraph{Non-Adaptive Posted-Price Mechanisms.} Non-adaptive posted-price mechanisms generalize single-price mechanisms by allowing the mechanism to set a price $p_{ij}$ for buyer $i$ to purchase item $j$. The mechanism still visits the buyers one at a time, and allows buyer $i$ to purchase any remaining items at the designated price. We show in Section~\ref{sec:manyprices} that even when buyers are just additive, any interpolation mechanism whose second phase is a non-adaptive posted-price mechanism and guarantees an $o(\log m / \log \log m)$ approximation ratio has $\Omega(m^{1-\epsilon})$ first-phase communication per bidder, for all $\epsilon > 0 $. Therefore, the single-bid mechanism cannot be improved by restricting attention to a smaller class of valuations, restricting attention to a smaller class of equilibrium concepts, setting different prices for different items, or allowing significantly more (but still sublinear) first-phase communication. 

\paragraph{Maximal-In-Range, Value Query, and Computationally Efficient Mechanisms.} Several recent works have identified lower bounds on approximation ratios that can possibly be obtained by these classes of mechanisms, which we will define in the corresponding sections. We extend these lower bounds to mechanisms with low first-phase communication that induce a mechanism in one of these classes. In Section~\ref{sec:MIR}, we extend techniques of Daniely et. al. based on generalizations of the VC-dimension~\cite{DanielySS15}, and in Section~\ref{sec:value}, we extend the techniques of Dobzinski and Vondrak based on structured sub-menus~\cite{Dobzinski11,DobzinskiV12}.

\subsection{Discussion and Future Work}\label{sec:discussion} Motivated by impossibility results associated with truthful mechanisms, and concerns regarding the strategic simplicity of existing simple mechanisms analyzed via price of anarchy, we propose the study of interpolation mechanisms. Using this new notion, we show that the single-bid mechanism of Devanur et. al.~\cite{DevanurMSW15} is essentially optimal for its class, even subject to quite significant generalizations. We note that, prior to our work, it was unclear even how to define a class containing this mechanism, let alone prove lower bounds against mechanisms ``like this.'' We also identify several classes of truthful mechanisms that are incompatible with interpolation in the sense that low first-phase communication doesn't allow for better approximation guarantees than no first-phase communication. 

Our work identifies \emph{adaptive} posted-price mechanisms (where the mechanism may choose what prices to set based on what items have already sold) as an intriguing class of mechanisms to study with interpolation, as none of the lower bounds from this work apply. Furthermore, Dynkin's secretary algorithm~\cite{Dynkin63} immediately implies an adaptive posted-price mechanism that gets a $1/e$ approximation for additive bidders, so mild adaptations of our lower bounds for non-adaptive posted-price mechanisms are unlikely to apply. Can an interpolation mechanism with $O(\log m)$ per bidder first-phase communication and an adaptive posted-price mechanism for its second phase guarantee a constant price of anarchy?

Our results also fit into a line of work designing combinatorial auctions with low price of anarchy via \emph{valuation compression}~\cite{DuttingFP11,DuttingHS13,DuttingFP14,BabaioffLNP14}. These mechanisms restrict the allowable valuation reports from buyers to a space where the VCG mechanism is computationally tractable, even though the buyers may have much more complex valuations. In our context, these mechanisms still consist of just a first phase, and therefore rich valuation classes (like submodular, subadditive, or even just additive) cannot be compressed all the way down to a class that can be indexed with just $O(\log m)$ bits without super-constant loss. On this front, interpolation mechanisms provide a new style of two-phase valuation compression where this level of compression may be attainable.

{Additionally, many existing truthful auction formats are naturally parameterized by parameters that are assumed to be known to the designer (e.g. buyers' budgets in a clinching auction~\cite{GoelMP12,GoelMP13,GoelMP14} or buyers' interest sets in single-minded combinatorial auctions). Our framework provides a natural extension of such mechanisms to settings where these parameters are instead private. For instance, one could take any clinching auction where the budgets are assumed to be known, and add a first phase where buyers are asked to report their private budget. It would be very interesting to analyze the price of anarchy of such interpolation mechanisms, as these parameters are often not public knowledge in practice.}

Finally, while we were motivated to study interpolation mechanisms for welfare maximization in combinatorial auctions, interpolation will also be useful in any setting where unfortunate lower bounds are known for truthful mechanisms but strategic simplicity is still a concern. A natural generalization of the presented setting, which we omit due to space constraints, is a model where rounds of truthful and non-truthful interaction might be interleaved (instead of having all non-truthful interaction come before all truthful interaction).{ It would be interesting to understand the power and complexity of such mechanisms in settings beyond necessarily just combinatorial auctions.}

\section{Preliminaries} \label{sec:prelim}
In a combinatorial auction, the designer has $m$ items to allocate to $n$ buyers. Each item can be allocated to at most one buyer, and the buyers can be charged any non-negative price. Agents have a valuation function $v_i(\cdot)$ mapping subsets of items to non-negative real values. Agents are quasi-linear, meaning that their utility for receiving items $S_i$ and paying price $p_i$ is $v_i(S_i) - p_i$. The designer's goal is to select an allocation that (approximately) maximizes the \emph{welfare}, $\sum_i v_i(S_i)$. 

A mechanism is \emph{truthful} if it is in every buyer's interest to tell the truth, no matter their type. Formally, if $p_i(\vec{v})$ denotes the expected price paid by buyer $i$ when the reported types are $\vec{v}$, and $S_i(\vec{v})$ denotes the (possibly random) set that buyer $i$ receives, then we must have:

$$\mathbb{E}_{S_i \leftarrow S_i(\vec{v})}[v_i(S_i)] - p_i(\vec{v})$$
$$ \geq \mathbb{E}_{S_i \leftarrow S_i(\vec{v}_{-i};v'_i)}[v_i(S_i)] - p_i(\vec{v}_{-i};v'_i),\ \ \forall i,\vec{v}_{-i},v_i, v'_i.$$

We define various classes of mechanisms and subclasses of valuation functions within the following sections.

\subsection{Interpolation Mechanisms}
An interpolation mechanism is a communication protocol with two phases. The first phase is non-truthful, and the output is a truthful mechanism. The second phase is the truthful mechanism output in phase one, and the output is an allocation of items and prices to charge. 

\begin{definition}(Interpolation Mechanism) {Let $\mathcal{M}$ denote the space of all truthful mechanisms for a combinatorial auction setting. Note that the output space of all $M \in \mathcal{M}$ is an allocation of items and charged prices. An interpolation mechanism provides a communication protocol, $P$, that outputs a mechanism $M \in \mathcal{M}$ based on the transcript of $P$. In phase one, bidders participate in the protocol $P$. In phase two, bidders participate in the truthful mechanism output by $P$ during phase one. After phase two, the items are allocated and prices charged according to the bidders' play of the phase two mechanism. If the second phase of an interpolation mechanism always lies inside a restricted class $\mathcal{C}$ of truthful mechanisms, then we call this a ``$\mathcal{C}$ interpolation mechanism.''}
\end{definition}

Our main results provide lower bounds on the per-bidder communication necessary during the first phase in order to possibly select a good truthful mechanism for the second phase. Formally, we say that an interpolation mechanism guarantees an approximation ratio of $c$ when buyers have types in $\mathcal{V}$ if for all $i, v_i \in \mathcal{V}$, there exists a phase-one strategy for buyer $i$, $s_i(v_i)$, such that for all $\vec{v} \in \mathcal{V}^n$, if buyers use the strategies $s_i(v_i)$ during phase one, and report truthfully during phase two, the resulting allocation obtains a $1/c$-fraction (in expectation) of the optimal social welfare for $\vec{v}$. 

Note that this approximation guarantee is not tied to any particular equilibrium concept. It is strictly easier to design an interpolation mechanism that guarantees an approximation ratio of $c$ than one that has a price of anarchy/stability of $c$ (stated formally in the following section), so lower bounds on the approximation ratio imply lower bounds on attainable price of anarchy/stability.

Of specific interest are interpolation mechanisms {that have $\poly(n,m)$ total communication, and only require bidders to consider $\poly(m)$ strategies. Note that bidders must, at least a priori, consider every possible strategy during phase one (but need only consider telling the truth during phase two). So in order to guarantee that bidders have at most $\poly(m)$ strategies to consider, the first phase must be especially simple.}

\begin{definition}(a priori learnable) We say that an interpolation mechanism is \textbf{a priori learnable} if the first phase contains a single simultaneous broadcast round of communication, and the per-bidder communication is $O(\log m)$.\footnote{{Note that, for instance, a single simultaneous broadcast round of $\poly(m)$ communication per bidder results in exponentially many strategies (as in simultaneous first or second price auctions).}}
\end{definition}

\begin{observation}\label{obs:apriori} Any buyer can run any standard regret minimization algorithm (for instance, Multiplicative Weights Updates) over her strategies in an a priori learnable interpolation mechanism in time/space $\poly(m)$. Therefore, a correlated equilibrium of any a priori learnable interpolation mechanism can be found in poly-time, and correlated equilibria arise as the result of poly-time distributed regret minimization.
\end{observation}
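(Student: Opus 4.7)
The plan is to reduce the observation to two standard facts: (i) Multiplicative Weights Updates over $N$ experts runs in $\poly(N)$ time per round with vanishing regret, and (ii) no-internal-regret dynamics converge to correlated equilibria. Both are off-the-shelf once the strategy space is shown to be small.

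First I would count strategies. By the definition of a priori learnable, phase one consists of a single simultaneous broadcast round in which each bidder transmits at most $O(\log m)$ bits, so the set of pure phase-one messages available to any bidder has cardinality at most $2^{O(\log m)} = \poly(m)$. Because phase two is by construction a truthful mechanism, the bidder's utility-maximizing phase-two behavior is simply to report her valuation truthfully, and she need not deliberate over phase-two actions. Thus the bidder's entire strategic choice collapses to selecting one of $\poly(m)$ phase-one messages, each of which can be encoded in $O(\log m)$ bits.

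Next I would instantiate MWU on this strategy set of size $N = \poly(m)$: maintaining one weight per message uses $O(N) = \poly(m)$ space, each multiplicative update costs $\poly(m)$ time, and after $T$ rounds the external regret is $O(\sqrt{T \log N}) = o(T)$. To compute the per-round payoffs needed by MWU, a bidder simulates phase two locally using the publicly broadcast vector of phase-one messages together with her own (known) valuation; since phase two is fully determined by the phase-one transcript and truthful reports, this simulation is well-defined and runs in $\poly(m)$ time. This proves the first sentence of the observation.

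For the second sentence, I would invoke the Foster--Vohra / Hart--Mas-Colell reduction: if every bidder runs a no-internal-regret algorithm (a direct variant of MWU still running in $\poly(m)$ time per round), then after $T = \poly(m)$ rounds the empirical distribution over joint phase-one message profiles is an $o(1)$-approximate correlated equilibrium of the one-shot game in which strategies are phase-one messages and payoffs are determined by truthful phase-two play. The main, and only mild, thing to be careful about is confirming that the induced payoff function is well-defined and computable from each bidder's local information; this is immediate from the definition of an interpolation mechanism and the truthfulness of phase two. No genuine obstacle arises — the statement is essentially a bookkeeping consequence of the definition of a priori learnable together with standard learning-theoretic machinery.
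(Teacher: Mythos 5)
Your proposal matches the paper's proof essentially step for step: truthfulness of phase two collapses all strategizing onto the $\poly(m)$ phase-one messages, standard regret minimization then runs in time/space $\poly(m)$, and convergence to correlated equilibrium follows from the Foster--Vohra / Hart--Mas-Colell results, exactly as the paper argues. The one quibble is your claim that a bidder can locally compute counterfactual phase-two payoffs from the broadcast messages and her own valuation (she does not know the other bidders' valuations, so this simulation is not actually well-defined); however, the paper's own proof is no more careful about the feedback model, so this does not constitute a departure from, or a gap relative to, the paper's argument.
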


\begin{proof}
As the second phase is a truthful mechanism, each buyer need not strategize over possible actions during the second phase. Therefore, buyers should always play their dominant strategies during phase two and need only learn over their strategies during phase one; this can only decrease their regret. As phase one is a normal form game and there are only $\poly(m)$ such strategies, each buyer can just run a standard regret minimization algorithm in time/space $\poly(m)$. If each buyer does this, their play will converge to a correlated equilibrium~\cite{FosterV97, HartM00}. 
\end{proof}

Note that price of anarchy bounds for correlated equilibria in a priori learnable interpolation mechanisms have more bite than price of anarchy bounds for solution concepts that don't arise naturally. The single-bid mechanism, for instance, is a priori learnable.

\subsection{Connection to Price of Anarchy and Stability}
The main application of our first-round communication lower bounds is on the price of anarchy or stability achievable for any a priori learnable interpolation mechanism. Price of anarchy/stability is typically defined for the social welfare, but has recently been considered also for revenue~\cite{HartlineHT14}, and is well-defined for more general objectives as well. The observation below holds for any objective, but we state it for social welfare in combinatorial auctions since that is the focus of this paper.

\begin{definition} Let $E$ denote any solution concept (i.e. Nash equilibria) for the mechanism $M$, and $\mathcal{V}$ denote any set of valuation functions. Then the price of anarchy (PoA) and Price of Stability (PoS) of $M$ with respect to $E$ when buyers have valuations in $\mathcal{V}$ are:
$$PoA = \max_{\vec{v} \in \mathcal{V}^n} \frac{\max_{S_1\sqcup\ldots\sqcup S_n} \{\sum_i v_i(S_i)\}}{\min_{\vec{s} \in E}\{\mathbb{E}_{S_1,\ldots,S_m \leftarrow M(\vec{s})}[\sum_i v_i(S_i)]\}}.$$
$$ PoS =  \max_{\vec{v} \in \mathcal{V}^n} \frac{\max_{S_1\sqcup\ldots\sqcup S_n}\{\sum_i v_i(S_i)\}}{\max_{\vec{s} \in E}\{\mathbb{E}_{S_1,\ldots,S_m \leftarrow M(\vec{s})}[\sum_i v_i(S_i)]\}}.$$
\end{definition}

\begin{observation}
If an interpolation mechanism has price of anarchy or price of stability $\alpha$ at any non-empty equilibrium concept, then that same interpolation mechanism guarantees an approximation ratio of $\alpha$. Therefore, lower bounds on the approximation ratios of interpolation mechanisms imply lower bounds on the possible price of anarchy/stability obtainable by those same mechanisms. 
\end{observation}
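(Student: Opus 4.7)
The plan is to unpack both definitions and extract, from the equilibrium strategies whose existence is guaranteed by the PoS (or PoA) hypothesis, a family of per-bidder strategies $s_i(v_i)$ that witnesses the approximation ratio of $\alpha$. First I would unpack the hypothesis: if the PoS is at most $\alpha$, then for every valuation profile $\vec{v}\in\mathcal{V}^n$ there is an equilibrium $\vec{s}^{\vec{v}}\in E$ whose induced expected welfare is at least $\text{OPT}(\vec{v})/\alpha$. For PoA the analogous statement holds for \emph{every} equilibrium, which is strictly stronger and makes the subsequent selection step trivial.

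The main conceptual point, and the one that needs some care, is that the equilibrium component $s_i^{\vec{v}}$ for bidder $i$ may be taken to depend only on her own valuation $v_i$ and not on the rest of the profile. In an interpolation mechanism, bidder $i$'s only private input in phase one is $v_i$, so her phase-one message is necessarily determined by $v_i$ alone; and in phase two, truthful reporting is (weakly) dominant by definition of the second-phase truthful mechanism, so no further strategizing is required there. Thus an equilibrium of the underlying Bayesian game is a profile of maps $(s_1,\ldots,s_n)$ with $s_i:\mathcal{V}\to(\text{phase-one messages})$, together with honest reporting in phase two, and one may commit once and for all to such a profile. I would then set the strategies in the approximation-ratio definition to be precisely these $s_i$.

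With the decomposition in hand, the remainder is a routine unwrapping. When every bidder plays $s_i(v_i)$ in phase one and reports truthfully in phase two, the play realizes, for every $\vec{v}$, an equilibrium in $E$ whose expected welfare is at least $\text{OPT}(\vec{v})/\alpha$; this is exactly the approximation-ratio guarantee. The ``therefore'' clause of the observation, asserting that lower bounds on approximation imply lower bounds on PoA/PoS, then follows as the immediate contrapositive.

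The main obstacle, if there is one, is the type-decomposition step in the middle paragraph. For Bayes--Nash-style concepts it is automatic because equilibria there are inherently type-indexed; for concepts sometimes phrased in a full-information manner one must explicitly invoke that phase-one messages can only be functions of private inputs in order to restrict to type-dependent strategy profiles before appealing to the PoA/PoS bound. Once this is handled, every other step is pure definition chasing.
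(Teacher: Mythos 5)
Your proposal is correct and follows essentially the same route as the paper's own (very terse) proof, which simply samples each $s_i(v_i)$ from an equilibrium at which the PoA/PoS bound holds and uses these as the witness strategies in the definition of guaranteeing an $\alpha$-approximation, with the ``therefore'' clause following by contraposition. Your additional discussion of why equilibrium strategies can be taken to be type-indexed (and truthful in phase two) spells out a point the paper leaves implicit, but it is the same argument.
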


\begin{proof}
Just sample each strategy $s_i(v_i)$ from any equilibrium where the price of anarchy/stability holds. This strategy immediately witnesses that the interpolation mechanism guarantees an $\alpha$-approximation.
\end{proof}

\section{Single-Price Mechanisms}\label{sec:singleprice}
In this section, we consider \emph{single-price mechanisms}. A single-price mechanism visits bidders one at a time and offers the current bidder the opportunity to buy any number of remaining items at $p_i$ per item. The main result of this section is the following:

\begin{theorem}\label{thm:oneprice}
There exist profiles of additive buyers for which the best single-price mechanism achieves an $\Omega(\log m / \log \log m)$-approximation. Therefore, for all $C > 0$, no single-price interpolation mechanism with first-round communication $C$ per bidder obtains an $o(\log m/\log \log m)$-approximation on all profiles of additive buyers. This holds even when each buyer values each item at an integer between $1$ and $m$.
\end{theorem}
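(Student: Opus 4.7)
The plan is to exhibit, for each sufficiently large $m$, a specific profile of additive buyers with integer valuations in $[1,m]$ whose optimal welfare is a $\Theta(\log m / \log\log m)$ factor larger than the welfare achievable by \emph{any} single-price mechanism. Once this is shown, the second sentence of the theorem is immediate: any single-price interpolation mechanism, regardless of its first-phase communication $C$, ultimately outputs some single-price mechanism as a function of the transcript, and that mechanism is welfare-bounded by the best single-price mechanism on the bad profile, which by construction achieves at most $\text{OPT}/\Omega(\log m/\log\log m)$.

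The construction I have in mind uses $L = \Theta(\log m/\log\log m)$ buyers and value set $\{c^0, c^1, \ldots, c^{L-1}\}$ with $c = L$, so all values are integers in $[1,m]$ since $c^{L-1} \le m$. Each buyer's additive valuation has exactly $a_i := m(c-1)/c^{i+1}$ items at value $c^i$, with the \emph{assignment} of values to items chosen adversarially (either deterministically by a combinatorial design or by a random permutation). Two properties need to hold: (i) the optimal welfare, $\sum_j \max_k v_{k,j}$, is $\Omega(m L^2)$, because across $L$ buyers the per-item maximum is $\Theta(L^2)$ under the tail profile $\Pr[v \ge c^i] = c^{-i}$; and (ii) no single-price mechanism achieves welfare more than $O(mL)$. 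Property (i) is a direct calculation (tail-sum of maxima of $L$ i.i.d.\ samples from this distribution).

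The heart of the argument is property (ii). I would observe that the set of distinct single-price mechanisms is small: prices can WLOG be taken from $\{c^0, \ldots, c^{L-1}\}$ and orderings are permutations of $L$ buyers, so there are at most $L! \cdot L^L = m^{O(1)}$ effectively distinct mechanisms. For each such fixed mechanism $(\sigma, p_1,\ldots,p_L)$, I would bound the expected welfare by a per-item analysis: for each item $j$, compute the probability that it is first claimed by each buyer in order $\sigma$ and the conditional expected value. After simplification this reduces to bounding $\sum_t (L - i_{\sigma(t)}) \prod_{s<t}(1 - c^{-i_{\sigma(s)}}) = O(L)$ per item, hence $O(mL)$ in total. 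Combined with concentration (welfare is a sum of bounded per-item contributions) and a union bound over the $m^{O(1)}$ mechanisms, a single profile simultaneously realizes $\text{OPT} = \Omega(mL^2)$ and $M(\vec v) \le O(mL)$ for every single-price $M$.

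The main obstacle is the per-mechanism $O(L)$-per-item bound, because the ``increasing-price'' strategy---visit buyer $t$ at price $c^{L-t}$---naively threatens to extract welfare at every scale and could match OPT up to constants. Defeating this strategy requires leveraging the specific correlations in the construction: the items that each buyer would want at a given scale must overlap heavily across buyers, so that once one buyer claims its ``slice'' at scale $c^\ell$, the next buyer's slice at that scale is mostly exhausted. Formalizing this overlap (either by choosing subsets explicitly or by showing that the union-bound is tight in the random-permutation model) is where the technical work lies; once established, the per-item bound follows and the rest of the proof is routine.
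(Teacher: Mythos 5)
There is a genuine gap, and it sits exactly where you flagged it: property (ii) is not just the missing technical step, it is false for the natural instantiation of your construction. If each buyer's values are assigned to items by independent random permutations (so that per item the $L$ values are essentially i.i.d.\ with $\Pr[v \ge c^i] \approx c^{-i}$), then the decreasing-price single-price mechanism that visits buyer $t$ at price $c^{L-t}$ achieves a \emph{constant}-factor approximation, not $O(mL)$. Indeed, buyer $t$ buys (essentially all of) her items at levels $L-t,\dots,L-1$, which are worth about $t\cdot m(c-1)/c$ in total, and the items removed by earlier buyers number only about $m/c^{L-t}$ random items, costing her a negligible amount; summing over $t$ gives welfare about $mL^2/2$, while $\mathrm{OPT} \le m\bigl(1+L(L-1)\bigr) \approx mL^2$. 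So the "heavy overlap across buyers at each scale" is not a formality to be checked but a structural feature your construction lacks and cannot get from randomness alone; and the fully-overlapping extreme (all buyers value the same items at the same level) kills property (i) instead, since then the per-item max equals the common value and $\mathrm{OPT} = \Theta(mL)$.

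The paper threads this needle with a different, explicit construction: items are partitioned into buckets, \emph{every} bidder values each bucket-$i$ item at $c^i$, and each item additionally has one designated special bidder who values it at $c^{i+1}$, with each bidder owning an equal share of special items per bucket. Overlap is then automatic (any earlier bidder facing a price at most $c^j$ buys up all of bucket $j$, not just "her own" items), while the special-bidder boost makes $\mathrm{OPT} = b\,c^{b+1}$ exceed the generic welfare by a factor $c$ per item. The lower bound against \emph{every} single-price mechanism is then a short deterministic count: bidder $i$ wins her special items in bucket $j$ only if all earlier prices exceed $c^j$ and $p_i \le c^{j+1}$, and since each $p_i$ lies in $(c^j, c^{j+1}]$ for at most one $j$, at most $n+b$ of the $nb$ special groups are won, giving welfare at most roughly $\bigl(\tfrac{n+b}{nb} + \tfrac1c\bigr)\mathrm{OPT}$ and hence an $\Omega(\min(n,b,c)) = \Omega(\log m/\log\log m)$ gap with $b=c=n$. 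Note this single profile defeats all single-price mechanisms simultaneously, so no discretization, concentration, or union bound over transcripts is needed (your reduction of the second sentence to the first is fine, but the probabilistic-method scaffolding is unnecessary once the right profile exists); the probabilistic method is reserved in the paper for the non-adaptive posted-price theorem, where per-item prices make a single hard profile impossible.
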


\begin{proof}
Consider the following example. There are $b$ buckets of items (indexed from $0$ to $b-1$), with bucket $i$ containing $c^{b-i}$ items, for some constants $b, c$ to be set later. The value of (almost) every bidder for each item in bucket $i$ is $c^i$. Each item is ``special'' for exactly one bidder, who values it instead at $c^{i+1}$. Each bidder has exactly $c^{b-i}/n$ special items in bucket $i$. It is clear that the optimal allocation in this instance is to award each bidder each of their special items, which has welfare $bc^{b+1}$. 

Now consider any single-price mechanism, with prices $p_1,\ldots, p_n$. We want to consider when bidder $i$ will get his special items in bucket $j$. Notice that bidder $i$'s special items in bucket $j$ are available to her if and only if $p_k >c^j$ for all $k < i$. Bidder $i$ will choose to purchase her special items in bucket $j$ if and only if $p_i \leq c^{j+1}$. 

So for each bucket $j$, let $i_j$ denote the first bidder for which $p_i \leq c^j$ (w.l.o.g. such a bidder exists as it is always optimal to set $p_n = 0$), and $n_j$ denote the number of bidders before $i$ whose price is at most $c^{j+1}$. Then the number of bidders who get their special items in bucket $j$ is exactly $n_j+1$. So the total number of pairs $(i,j)$ such that bidder $i$ gets her special items in bucket $j$ is exactly $b+ \sum_{j=1}^b n_j$. It's also clear that $\sum_{j=1}^b n_j \leq n$, as $p_i \in (c^j,c^{j+1}]$ for at most one $j$. So the number of pairs $(i,j)$ such that bidder $i$ gets her special items in bucket $j$ is at most $b+n$. 

Finally, observe that if the number of pairs $(i,j)$ such that bidder $i$ receives her special items in bucket $j$ is $x$, then the welfare is exactly $xc^{b+1}/n + (bn-x)c^b/n$, which achieves at most a $(\frac{x}{nb} + \frac{1}{c})$-fraction of the optimal welfare. Plugging in for $x = n+b$, this is a $1/(1/n+1/b+1/c)$-approximation. 

Setting $b = c = n$ provides an example with $m = \Theta(n^n)$ items (so $n = \Theta(\log m /\log \log m)$) for which no single-price mechanism obtains an $o(n) = o(\log m /\log\log m)$-approximation. 

\end{proof}

Notice that the impossibility above is quite strong: no amount of communication suffices to find a good single-price mechanism (because it is possible that one simply doesn't exist). This greatly strengthens an inapproximability result of~\cite{DevanurMSW15}, which just shows that their specific procedure (the single-bid mechanism) for selecting one doesn't obtain a better approximation ratio.

\begin{corollary}
No single-price interpolation mechanism obtains a price of anarchy or price of stability $o(\log m /\log \log m)$ at any solution concept that is guaranteed to exist on all profiles of additive buyers.
\end{corollary}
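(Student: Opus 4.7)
The plan is to derive this corollary immediately from Theorem~\ref{thm:oneprice} together with the observation stated earlier connecting price of anarchy/stability to the guaranteed approximation ratio. The key point is that once we have a PoA/PoS bound at any non-empty solution concept $E$, we can sample an equilibrium and use its coordinates as the first-phase strategies $s_i(v_i)$ in the definition of ``guaranteed approximation ratio.'' This converts equilibrium-based welfare guarantees into unconditional ones of the same quality.

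Concretely, I would argue by contraposition. Suppose, for contradiction, that some single-price interpolation mechanism $M$ achieves a price of anarchy (or stability) of $\alpha = o(\log m/\log\log m)$ at some solution concept $E$ that is guaranteed to be non-empty on every profile of additive buyers. By the observation in Section~\ref{sec:prelim}, $M$ then guarantees an approximation ratio of $\alpha$ on all additive profiles: for each $v_i$, pick $s_i(v_i)$ to be the marginal of some $\vec{s}\in E$ witnessing the PoA/PoS bound (such an $\vec{s}$ exists because $E$ is non-empty), and truthful play in phase two then yields welfare at least a $1/\alpha$-fraction of $\opt$ in expectation. This, however, contradicts Theorem~\ref{thm:oneprice}, which exhibits a family of additive profiles on which \emph{no} single-price mechanism—and hence no phase-two mechanism output by $M$, regardless of how much first-phase communication is used—can achieve better than an $\Omega(\log m/\log\log m)$-approximation.

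I do not expect any obstacle: the only subtlety is ensuring the sampled strategies are well defined, which is exactly why the hypothesis restricts attention to solution concepts guaranteed to be non-empty. Everything else is a direct composition of the observation with the theorem, and the fact that the theorem's lower bound applies to \emph{every} single-price mechanism (not just to a particular selection rule) is precisely what lets the conclusion hold regardless of the first-phase protocol used by $M$.
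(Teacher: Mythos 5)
Your proposal is correct and matches the paper's intended argument: the corollary is derived immediately from Theorem~\ref{thm:oneprice} (whose hard additive profiles defeat \emph{every} single-price phase-two mechanism, regardless of first-phase communication) combined with the observation in Section~\ref{sec:prelim} that a PoA/PoS bound at any non-empty solution concept yields a guaranteed approximation ratio by sampling equilibrium strategies. No substantive differences from the paper's route.
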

\section{Non-Adaptive Pricing Mechanisms}\label{sec:manyprices}
In this section, we consider \emph{non-adaptive posted-price} mechanisms. A non-adaptive posted-price mechanism orders the bidders however it wants (possibly randomly), then selects a price vector $\vec{p}_i$ for each bidder $i$. The bidders are visited one at a time, and offered the opportunity to purchase any subset $S_i$ of remaining items for price $\sum_{j \in S_i} p_{ij}$. The main result of this section is \mattnote{below}. Our proof uses the probabilistic method, which has also been used in~\cite{DuttingK15} to prove price of anarchy lower bounds.

\begin{theorem}\label{thm:manyprices}
Any non-adaptive posted-price interpolation mechanism that guarantees an approximation ratio of $o(\log m /\log \log m)$ on all profiles of additive bidders has first-round communication at least $m^{1-\epsilon}$ per bidder, for all $\epsilon > 0$. This holds even when each buyer values each item at an integer between $1$ and $m$.
\end{theorem}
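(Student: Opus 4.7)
The plan is to apply the probabilistic method, analogous to the proof of Theorem~\ref{thm:oneprice} but now forcing the mechanism to spend first-phase bits in order to set per-item prices correctly. Fix any interpolation mechanism whose second phase is non-adaptive posted-price and whose first phase uses at most $C$ bits per bidder. The first-phase protocol admits at most $2^{nC}$ transcripts, so the interpolation mechanism commits, before any bidder speaks, to a menu $\mathcal{M}$ of at most $2^{nC}$ candidate non-adaptive posted-price second-phase mechanisms; the first phase merely selects one. I want to exhibit a distribution $\mathcal{D}$ over additive profiles such that, for any fixed $M\in\mathcal{M}$, the welfare of $M$ on $\vec v \sim \mathcal{D}$ is at most an $O(\log\log m/\log m)$-fraction of $\mathrm{OPT}(\vec v)$ with probability at least $1-\exp(-m^{1-\epsilon/2})$. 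A union bound over $|\mathcal{M}|\le 2^{nC}\le 2^{nm^{1-\epsilon}}$ (with $n$ at most polylogarithmic in $m$, matching the single-bid regime) is then $<1$ whenever $C\le m^{1-\epsilon}$, so no strategy profile can exceed the $\Omega(\log m/\log\log m)$ barrier on $\vec v$.

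The distribution $\mathcal{D}$ would be a randomization of the bucket construction of Theorem~\ref{thm:oneprice}: set $b=c=\Theta(\log m/\log\log m)$ and, for each item $j$, independently draw a ``level'' $k_j\in\{0,\dots,b-1\}$ from a distribution that spreads $\mathrm{OPT}$ evenly across all $b$ buckets (so no single uniform price can simultaneously target more than one bucket's worth of welfare). Within each level, a random partition determines which bidder each item is special for, as in Theorem~\ref{thm:oneprice}: the special bidder values the item at $c^{k_j+1}$ and all others at $c^{k_j}$. Because a fixed $M\in\mathcal{M}$ must commit its prices $p_{ij}$ before seeing the $k_j$'s, and because prices that work for bucket $k$ are wrong for bucket $k\pm 1$, the randomization of the $k_j$'s is what creates the hardness.

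The main step is the per-mechanism welfare bound. I would adapt the pair-counting and telescoping argument of Theorem~\ref{thm:oneprice}: for any item $j$ of realized level $k_j=k$, a bidder $i$ takes $j$ unconditionally when $p_{ij}\le c^k$, conditionally (only if special) when $p_{ij}\in(c^k,c^{k+1}]$, and never when $p_{ij}>c^{k+1}$. Telescoping over bidders in the order $\sigma$, the expected welfare per item is bounded by the sum of (i) one ``unconditional'' contribution of order $\mathbb{E}[c^{k_j}]$ and (ii) the conditional contribution $c^{k_j+1}|\{i:p_{ij}=c^{k_j+1}\}|/n$, whose total over items telescopes under $\sum_k|B_j^{(k)}|\le n$. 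With the weighted bucket distribution, this forces the per-item expected welfare under any fixed $M$ to be $O(b+c)=O(b)$, whereas $\mathrm{OPT}$ is $\Theta(bc)$ per item; this yields the desired $\Omega(b)=\Omega(\log m/\log\log m)$ gap in expectation. Concentration with tails $\exp(-\Omega(m^{1-\epsilon/2}))$ would then follow from a Bernstein-type inequality, using that per-item welfare contributions are independent across items (conditional on prices) and that the appropriate choice of bucket distribution keeps the variance of each item's contribution close to its mean.

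The hardest step, and the main obstacle, is designing the distribution over levels $k_j$ (and the associated value gap $c$) so that three requirements hold simultaneously: (i) the per-mechanism expectation bound $O(b)$ per item holds \emph{uniformly} over every fixed pricing, not just uniform ones; (ii) the variance is controlled finely enough that one or two rare high-value items cannot single-handedly inflate the welfare above the concentration threshold; and (iii) $\mathrm{OPT}$ remains $\Theta(bc)$ per item with high probability, so that the fixed-mechanism vs.\ OPT gap is genuinely $\Omega(\log m/\log\log m)$. Calibrating these three tensions is the central technical challenge, and is where the probabilistic-method style of~\cite{DuttingK15} most directly informs the construction.
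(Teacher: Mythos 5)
Your outline is the paper's argument: a product distribution over additive profiles built from geometric ``levels,'' a per-item threshold/counting analysis of any fixed price vector, concentration across items, and a union bound over the at most $2^{nC}$ second-phase mechanisms determined by first-phase transcripts. The only genuine gap is that you declare the crux---choosing the level distribution and calibrating the concentration---unresolved, and in fact it requires no delicate calibration. The paper simply draws each item's value vector independently: with probability $c^{-k}$ it is a uniformly random permutation of $(c^{k+1},c^k,\dots,c^k)$ for each $k\in\{1,\dots,b\}$, and all-zero otherwise. This is exactly your ``spread OPT evenly'' desideratum: each level contributes $c$ to the expected per-item optimum, so $\mathbb{E}[\mathrm{OPT}]=bcm$. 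Your own counting argument (for each $k$, the special bidder wins with probability $(1+n_k)/n$ where $n_k$ counts earlier bidders with $p_{ij}\le c^{k+1}$, and $\sum_k n_k\le n$ since each $p_{ij}$ lies in at most one interval $(c^k,c^{k+1}]$) then gives expected per-item welfare at most $cb/n+c+b$ for every fixed pricing, with no further uniformity issue to worry about.

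Your concentration target is also miscalibrated, but harmlessly so. With $b=c=n$ each item's contribution lies in $[0,c^{b+1}]=[0,n^{n+1}]$, so a plain multiplicative Chernoff bound only yields tails of order $e^{-\Theta(m/n^n)}$, not $e^{-m^{1-\epsilon/2}}$; a couple of rare high-value items genuinely can occur, and no Bernstein-type variance argument is needed or helpful. The point is that the weaker tail already suffices: taking $m=n^{n/\epsilon}$ (so $n=\Theta(\epsilon\log m/\log\log m)$ and $m/n^n=m^{1-\epsilon}$ up to the reparametrization of $\epsilon$), the union bound is over at most $2^{m/n^n}$ transcripts when per-bidder communication is at most $m/n^{n+1}$, and $2^{m/n^n}\cdot e^{-\Theta(m/n^n)}<1$. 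The same Chernoff bound lower-bounds $\mathrm{OPT}$ by $n^2m/2$ with high probability, so with nonzero probability a single profile defeats every mechanism in the menu simultaneously, which rules out an $n/12$-approximation regardless of the bidders' first-phase strategies. With these two adjustments your proposal becomes precisely the paper's proof.
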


\begin{proof}
We will use the probabilistic method to define a set of profiles of additive bidders such that no non-adaptive posted-price mechanism does well on much of the set. Let each $\vec{v}_j$ (the vector of values of each bidder for item $j$) be drawn independently, and be equal to a random permutation of $(c^{k+1},c^k,\ldots,c^{k})$ with probability $1/c^k$ for each $k \in \{1,\ldots,b\}$, and $(0,\ldots,0)$ with probability $1-\sum_{k=1}^b 1/c^k$ for constants $c\geq 2, b$ to be set later. 

It is clear that the expected maximum value per item is exactly $bc$, so the expected optimal welfare is $bcm$. Consider now any non-adaptive posted-price mechanism, and restrict attention to prices for item $j$. For each $k$, let $i_k$ denote the first bidder such that $p_{i_kj} \leq c^{k}$, and $n_k$ denote the number of bidders before $i_k$ such that $p_{ij} \leq c^{k+1}$. Then the probability that this mechanism awards the item to the ``special'' bidder when the profile is a random permutation of $(c^{k+1},c^{k},\ldots,c^{k})$ is exactly $\frac{1+n_k}{n}$. Therefore, the expected welfare of this posted-price mechanism, just considering contributions from item $j$, is $\sum_{k=1}^b c(1+n_k)/n + (n-1-n_k)/n$. It is also clear that $\sum_{k=1}^b n_k \leq n$, as each $p_{ij} \in (c^k,c^{k+1}]$ for at most one $k$. So the expected welfare per item of this non-adaptive posted-price mechanism is at most $cb/n+c+b$, and the total expected welfare is at most $(cb/n+c+b)m$. 

Because the values for each item are drawn independently, the optimal welfare and the welfare of this non-adaptive posted-price mechanism is the sum of $m$ independent random variables, each in $[0,c^{b+1}]$. Therefore, we can use the Chernoff bound to bound the probability that these random variables deviate from their expectation. 

Set $b = c = n$. Then the probability that the welfare of any fixed item pricing exceeds $2(3n)m$ is at most $e^{-m/n^n}$. The probability that the optimal welfare is less than $(n^2m)/2$ is at most $e^{-m/(4n^{n-1})}$. So consider any set $P$ of at most $2^{m/n^n}$ different non-adaptive posted-price mechanisms. Taking a union bound over all mechanisms $M \in P$, we see that with non-zero probability, the welfare of $M$ is at most $6nm$ while the optimal welfare is at least $n^2m/2$. Therefore, there exists a profile of additive bidders for which no mechanism in $P$ is an $n/12$-approximation. 

If the first-round communication of each player is at most $m/n^{n+1}$, then there are only $2^{m/n^n}$ possible transcripts from the first round, and therefore only $2^{m/n^n}$ different non-adaptive posted-price mechanisms can possibly result. By the above reasoning, this implies the existence of a profile for which every possible mechanism selected (and therefore every outcome selected by the protocol) does not obtain an $n/12$-approximation. For any fixed $\epsilon$, setting $m = n^{n/\epsilon}$ yields an instance with $n = \Theta(\epsilon \log m /\log \log m)$ that requires $m^{1-\epsilon}$ first-round bits per bidder to optain an $n/12$ approximation. 
\end{proof}

Interestingly, there is always a non-adaptive posted-price mechanism that allocates the items optimally: set $p_{ij} = \max_{i' \neq i} v_{i'j}$ for all $i, j$. Each $v_{i'j}$ can be communicated with $\log m$ bits, so the entire mechanism can be found with $m\log m$ bits of communication per bidder. The theorem states that sublinear communication doesn't suffice to find a very good mechanism.

\begin{corollary}
No a priori learnable non-adaptive posted-price interpolation mechanism obtains a price of anarchy or price of stability $o(\log m / \log \log m)$ at any solution concept that is guaranteed to exist on all profiles of additive buyers.
\end{corollary}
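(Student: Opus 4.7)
The plan is to derive this corollary by combining Theorem~\ref{thm:manyprices} with the price-of-anarchy/stability-to-approximation-ratio reduction already established in Observation~2. I would not attempt a new direct argument here; instead, I would observe that all of the real work is done in Theorem~\ref{thm:manyprices}, and the corollary merely repackages that lower bound into the more familiar language of equilibrium analysis.

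First, I would recall what ``a priori learnable'' buys us: by definition, such an interpolation mechanism has first-phase per-bidder communication only $O(\log m)$, which in particular is $o(m^{1-\epsilon})$ for every $\epsilon > 0$. Next, I would invoke the contrapositive of Theorem~\ref{thm:manyprices}: any non-adaptive posted-price interpolation mechanism whose first-phase communication is $o(m^{1-\epsilon})$ per bidder (for some fixed $\epsilon > 0$) cannot guarantee an approximation ratio of $o(\log m/\log\log m)$ on all profiles of additive bidders valued in $\{1,\dots,m\}$. So every a priori learnable non-adaptive posted-price interpolation mechanism fails to guarantee an $o(\log m/\log\log m)$ approximation.

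I would then close the argument by applying Observation~2: if such a mechanism had price of anarchy (or price of stability) $\alpha = o(\log m/\log\log m)$ at some non-empty equilibrium concept $E$, then sampling each bidder's phase-one strategy from a strategy profile in $E$ (which exists by the ``guaranteed to exist'' hypothesis) would yield phase-one strategies $s_i(v_i)$ witnessing an $\alpha$-approximation guarantee, contradicting the previous paragraph. Hence no such PoA/PoS bound can hold.

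The only subtlety, and the place where I would be careful, is making sure the quantifiers line up: Theorem~\ref{thm:manyprices} only produces the lower bound for the additive-with-integer-values-in-$[1,m]$ class, so the corollary's ``on all profiles of additive buyers'' clause must cover this subclass, which it does. There is no real obstacle beyond this bookkeeping; the corollary is a direct consequence of the two prior results.
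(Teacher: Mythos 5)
Your proposal is correct and matches the paper's intended argument: the corollary is stated as an immediate consequence of Theorem~\ref{thm:manyprices} (whose communication lower bound of $m^{1-\epsilon}$ rules out the $O(\log m)$ first-phase communication of any a priori learnable mechanism) combined with the observation that a PoA/PoS bound at a non-empty solution concept yields an approximation guarantee by sampling phase-one strategies from equilibrium. Your quantifier bookkeeping regarding the integer-valued additive subclass is also consistent with how the paper uses the theorem.
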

\section{Maximal-In-Range Mechanisms}\label{sec:MIR}
In this section, we consider \emph{maximal-in-range} (MIR) mechanisms. A maximal-in-range mechanism selects some subset $\mathcal{F}' \subseteq 2^{[n]\times [m]}$ of feasible allocations, and \emph{always} selects an outcome in $\arg \max_{x \in \mathcal{F}'}\{\textsc{Welfare}(x)\}$ (where the welfare is computed with respect to the valuation profile). In other words, a maximal-in-range mechanism always optimizes welfare exactly over a restricted set of possible outcomes. We provide a mild generalization of the techniques of Daniely et. al.~\cite{DanielySS15} that apply to MIR interpolation mechanisms rather than just MIR mechanisms. With these new techniques, we show the following theorem. All proof details are in Appendix~\ref{app:MIR} for space considerations.

\begin{theorem}\label{thm:MIR} For all $\delta > 0$, the following hold:
\begin{itemize}
\item Assuming $NP \subsetneq P/poly$, any poly-time (runs in time $\poly(n, m)$) MIR interpolation mechanism that obtains an approximation ratio $m^{1/3-2\delta/3}$ whenever buyers are single-minded\footnote{A valuation function $v(\cdot)$ is single-minded if there is a special set $S$ and $v(T) = v(S)$ for all $S \subseteq T$, and $v_i(T) = 0$ otherwise.} has first-round communication at least $m^\delta$ per bidder.
\item Assuming $NP \subsetneq P/poly$, any poly-time (runs in time $\poly(n, m)$) MIR interpolation mechanism that obtains an approximation ratio $m^{1/3-\delta}/5$ whenever buyers are capped-additive\footnote{A valuation function $v(\cdot)$ is capped-additive if there is a budget $b$ such that $v(S) = \min\{b,\sum_{j \in S} v(\{j\})\}$.} has first-round communication at least $m^{1/3}$ per bidder.
\item Any poly-communication (total communication $\poly(n, m)$) MIR interpolation mechanism that obtains an approximation ratio $m^{1/3 - \delta}$ whenever buyers are submodular\footnote{A valuation function $v(\cdot)$ is submodular if $v(S \cup T) + v(S \cap T) \leq v(S) + v(T)$ for all $S, T$.} has first-round communication at least $m^{1/3}$ per bidder. 
\end{itemize}
\end{theorem}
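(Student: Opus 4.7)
The plan is to extend the VC-dimension framework of Daniely et al.~\cite{DanielySS15} from truthful MIR mechanisms to MIR interpolation mechanisms via a single unifying observation. Write $C$ for the first-round per-bidder communication budget. Since the phase-one transcript takes one of at most $N := 2^{nC}$ values, a MIR interpolation mechanism induces at most $N$ distinct MIR mechanisms $\{M_t\}_t$ in phase two, with ranges $\mathcal{F}_t \subseteq 2^{[n]\times[m]}$. Because the definition of approximation ratio requires that for each valuation profile $\vec{v}$ \emph{some} transcript lead to a good allocation, the ``effective range'' $\mathcal{F}^* := \bigcup_t \mathcal{F}_t$ must contain an $\alpha$-approximate allocation for every $\vec{v}$ in the class considered. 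The interpolation mechanism is thus at least as approximating as the (possibly non-constructive) MIR mechanism with range $\mathcal{F}^*$, and its descriptive complexity satisfies $\log|\mathcal{F}^*| \leq nC + \log\max_t|\mathcal{F}_t|$, while each $M_t$ individually inherits the computational tractability of the full mechanism.

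For the third bullet (submodular, poly-communication), I would invoke a Daniely-style counting lower bound directly on $\mathcal{F}^*$: any range that $m^{1/3-\delta}$-approximates welfare on submodular profiles must have shattering dimension of order $m^{1/3}$. Poly-communication in phase two controls $\log\max_t|\mathcal{F}_t|$ polynomially, so the required dimension must come from the union over transcripts, forcing $nC = \Omega(m^{1/3})$ and hence the claimed per-bidder bound.

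For the first two bullets (single-minded, capped-additive, poly-time), the plan is to follow Daniely et al.'s NP-hardness reduction, but to strengthen its conclusion from ``one hard instance'' to ``a distribution of hard instances that is hard on average.'' Concretely, their reduction encodes CSP-style instances into bidder valuations so that any poly-time MIR achieving the stated approximation ratio would yield a non-uniform poly-time algorithm for an NP-hard problem, contradicting $NP \not\subseteq P/poly$. The adaptation is to show that any \emph{fixed} poly-time MIR fails with all but sub-exponentially small probability on this distribution, then union-bound over the $N = 2^{nC}$ induced MIR mechanisms $\{M_t\}$; the intersection of ``simultaneously hard'' profiles remains nonempty whenever $N$ times the per-$M_t$ success probability is less than $1$, which holds for $nC$ up to $m^\delta$ in the single-minded case and up to $m^{1/3}$ in the capped-additive case, matching the desired thresholds after tracking constants through their gap-amplification.

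The main technical obstacle is extracting from Daniely et al.~\cite{DanielySS15} a sufficiently strong quantitative statement: either a dimension lower bound on $\mathcal{F}^*$ whose dependence on the approximation ratio survives the extra $\log N$ term introduced by the union over transcripts, or a hard distribution whose per-mechanism failure probability is small enough to absorb a union bound over $2^{nC}$ MIRs. Their original argument only needs a single hard instance, so some care is needed to recover an average-case version; once this quantitative refinement is in hand, all three bullets follow by choosing parameters as stated and reading off the approximation ratio achievable through the effective range $\mathcal{F}^*$.
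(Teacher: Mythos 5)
Your first step matches the paper exactly: with per-bidder first-round budget $C$ there are at most $2^{nC}$ possible phase-two MIR mechanisms with ranges $\mathcal{H}^1,\dots,\mathcal{H}^t$, and the approximation guarantee of the interpolation mechanism is inherited by the (virtual) MIR mechanism over the union $\mathcal{H}=\cup_i\mathcal{H}^i$, which therefore has the containment (single-minded) or intersection ($0/1$-additive) property of Daniely et al. After that, however, both of your continuations have genuine gaps. For the third bullet, your argument is a cardinality/dimension count on $\mathcal{F}^*$: ``poly-communication controls $\log\max_t|\mathcal{F}_t|$ polynomially, so the dimension must come from the union over transcripts.'' This conflates shattering dimension with $\log$-cardinality, and even as a pure counting argument the numbers do not close: a poly-communication phase-two mechanism can have $\log|\mathcal{F}_t| = \poly(n,m)$, e.g.\ $m^{100}$, which swamps any $m^{2/3}$-type cardinality lower bound one can extract from the intersection property, so no contradiction with $nC \le m^{1/3-\epsilon}\cdot m^{1/3}$ is forced. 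The real source of hardness is different: one must show that a \emph{single} range $\mathcal{H}^i$ shatters a large pair $(S,A)$, and then the exponential \emph{communication} lower bound for exactly maximizing welfare over a shattered pair (for submodular bidders) contradicts the poly-communication of that one phase-two mechanism. Extracting a single shattering $\mathcal{H}^i$ from the union is precisely the technical work the paper does (its generalizations of Daniely et al.'s Theorems 2.6 and 2.8): an averaging argument over the $t$ ranges combined with the generalized Sauer--Shelah lemma, which is where the constraint $t \le 2^{nC}$ with $nC$ at most roughly $m^{1/3}$ (resp.\ $m^{\delta}$) enters. Your proposal names no mechanism for pushing the shattering from the union down to one $\mathcal{H}^i$, and the union-cardinality route cannot substitute for it.

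For the first two bullets the gap is more serious. You propose to upgrade Daniely et al.'s worst-case NP-hardness reductions to a hard \emph{distribution} on which any fixed poly-time MIR succeeds with sub-exponentially small probability, and then union bound over the $2^{nC}$ induced mechanisms. No such average-case statement follows from $NP \not\subseteq P/poly$; proving that every fixed polynomial-time algorithm fails on all but a sub-exponentially small fraction of instances of an NP-hard problem is a strong average-case hardness claim that is not known and is not ``a quantitative refinement'' of their reduction. The paper avoids this entirely by placing the union bound at the combinatorial level rather than the algorithmic level: the approximation guarantee gives the containment/intersection property of the union, the averaging-plus-Sauer--Shelah argument identifies one poly-time phase-two mechanism $\mathcal{H}^i$ that shatters a large pair, and then the \emph{original worst-case} reductions show that exact optimization over that shattered pair by a poly-time mechanism yields $NP \subseteq P/poly$ (the non-uniform advice absorbs the identity of the transcript, the mechanism, and the shattered pair). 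So the correct fix for your outline is not an average-case version of the reductions but the single-range shattering lemma with quantitative dependence on $t=2^{nC}$.
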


\begin{corollary}
Assuming $NP \subsetneq P/poly$, no a priori learnable, computationally efficient MIR interpolation mechanism obtains a price of anarchy or price of stability $o(m^{1/3})$ at any solution concept that is guaranteed to exist on all profiles of single-minded buyers, capped-additive buyers, or submodular buyers.
\end{corollary}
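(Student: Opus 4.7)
The plan is to derive the corollary as a direct consequence of Theorem~\ref{thm:MIR} together with the observation from Section~\ref{sec:prelim} that a price of anarchy or price of stability bound at a non-empty solution concept translates into an approximation ratio guarantee in the sense required by the theorem. The argument is by contradiction: suppose some a priori learnable, computationally efficient MIR interpolation mechanism $M$ achieved $\text{PoA}$ or $\text{PoS}$ equal to $\alpha = o(m^{1/3})$ at some solution concept $E$ that is guaranteed to exist on all profiles drawn from one of the three valuation classes (single-minded, capped-additive, or submodular). Since $E$ is non-empty on every such profile, the observation connecting PoA/PoS to the approximation ratio immediately yields that $M$ guarantees an approximation ratio of $\alpha = o(m^{1/3})$ on that class.

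Next I would unpack the two structural hypotheses on $M$. Being a priori learnable implies that the first-phase communication is at most $O(\log m)$ per bidder, and being computationally efficient implies that $M$ runs in $\poly(n,m)$ time (and in particular has $\poly(n,m)$ total communication). Now fix any constant $\delta>0$, say $\delta = 1/100$. For all sufficiently large $m$, any function that is $o(m^{1/3})$ is eventually at most $m^{1/3-2\delta/3}$, at most $m^{1/3-\delta}/5$, and at most $m^{1/3-\delta}$, so $M$ meets the approximation hypothesis of whichever of the three bullets of Theorem~\ref{thm:MIR} corresponds to the chosen valuation class.

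Applying the relevant bullet of Theorem~\ref{thm:MIR} to $M$ then forces the per-bidder first-phase communication to be at least $m^\delta$ in the single-minded case, at least $m^{1/3}$ in the capped-additive case (using the computational hypothesis together with $NP\subsetneq P/\poly$), and at least $m^{1/3}$ in the submodular case (using the $\poly(n,m)$ total communication hypothesis). All three bounds contradict the $O(\log m)$ first-phase bound coming from a priori learnability once $m$ is large enough. Repeating this argument once per valuation class establishes the corollary.

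I do not expect a substantive obstacle; the corollary is essentially a repackaging of Theorem~\ref{thm:MIR} through the PoA/PoS-to-approximation observation and the definition of ``a priori learnable.'' The only care needed is to invoke the right bullet for the right valuation class and to note that the assumption that $E$ be guaranteed to exist on the entire class is exactly what is required for the observation to be applied (so that a PoA or PoS bound actually implies the existence, for every type profile, of a phase-one strategy achieving the claimed approximation).
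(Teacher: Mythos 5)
Your proposal follows exactly the route the paper intends: the corollary is stated without a separate proof precisely because it is the immediate combination of the PoA/PoS-to-approximation observation, the $O(\log m)$ first-phase bound built into a priori learnability, and the communication lower bounds of Theorem~\ref{thm:MIR}, which is what you do. One small caution: your intermediate claim that any $o(m^{1/3})$ function is eventually at most $m^{1/3-2\delta/3}$ for a fixed $\delta$ is not literally true (e.g.\ $m^{1/3}/\log m$), so strictly speaking the theorem rules out ratios of the form $m^{1/3-\delta}$ for every fixed $\delta>0$ rather than all $o(m^{1/3})$ ratios; this looseness is inherited from the paper's own corollary statement and does not reflect a flaw in your approach, but it is worth phrasing the conclusion as ``no $m^{1/3-\delta}$ price of anarchy/stability for any constant $\delta>0$'' if you want the deduction to be airtight.
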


\section{Value Query and Computationally Efficient Mechanisms}\label{sec:value}
In this section, we consider value query mechanisms and arbitrary computationally efficient mechanisms. A mechanism is a \emph{value query} mechanism if it only interacts with buyer valuations with queries of the form: ``what is your value for set $S$?'' A computationally efficient mechanism is any mechanism that terminates in polynomial time in $m, n,$ and the space it takes to describe a valuation function. Note that for single-minded and capped-additive functions, the space required is also $\poly(m, n)$, but for submodular functions the space required may be larger. We provide a mild generalization of techniques of Dobzinski and Vondrak~\cite{Dobzinski11,DobzinskiV12} that apply to interpolation mechanisms rather than just truthful mechanisms. With these new techniques, we show the following theorem. All proof details are in Appendix~\ref{app:general} for space considerations.

\begin{theorem}\label{thm:valuequeries} For all $\delta > 0$, the following hold:
\begin{itemize}
\item Any value query interpolation mechanism that makes at most $\frac{e^{m^{1/3}}}{10m^8}-1$ queries that obtains an approximation ratio $m^{1/3-\delta}/20$ whenever buyers have submodular valuations has first-round communication at least $m^\delta$ per bidder.
\item Assuming $RP \neq NP$, any computationally efficient interpolation mechanism that obtains an approximation ratio $m^{1/3-\delta}/20$ has first-round communication at least $m^\delta$ per bidder.
\end{itemize}
\end{theorem}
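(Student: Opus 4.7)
The plan is to adapt the sub-menu counting argument of Dobzinski and Vondrak~\cite{Dobzinski11,DobzinskiV12} by coupling it with a union bound over the set of second-phase truthful mechanisms that can possibly be induced by the first phase. Any interpolation mechanism with first-phase per-bidder communication at most $c$ has at most $2^{nc}$ distinct first-phase transcripts, and hence induces at most $2^{nc}$ distinct truthful mechanisms $M_\tau$ in phase two. The strategy is to show that for each such $M_\tau$ the fraction of submodular inputs (drawn from a carefully chosen hard distribution) on which it fails to obtain an $m^{1/3-\delta}/20$-approximation is so close to $1$ that it survives a union bound across all $2^{nc} \le 2^{n m^\delta}$ candidate mechanisms, producing a single profile on which every induced mechanism fails.

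First I would recall the Dobzinski--Vondrak construction, which plants a hidden partition of $[m]$ that defines the welfare-optimal allocation under a specific family of submodular valuations. Their key lemma says that any truthful value-query mechanism making at most $\frac{e^{m^{1/3}}}{10m^8}-1$ queries behaves like an element of a small ``sub-menu'' for each bidder, so the fraction of planted inputs on which it beats an $m^{1/3-\delta}/20$-approximation is at most $\exp(-\Omega(m^{1/3}))$. I would repackage their counting argument to expose this failure probability explicitly (their original statement only asserts the existence of one bad input per mechanism, but the sub-menu count gives exactly this exponentially small fraction). The same machinery, in computational form, gives the analogous statement under $RP\neq NP$: if a poly-time truthful mechanism achieved the approximation on more than an $\exp(-\Omega(m^{1/3}))$ fraction of planted inputs, one could derandomize it into a polynomial-time algorithm for an NP-hard partition-recovery problem.

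Next I would lift this to interpolation mechanisms. Let $\mathcal{D}$ be the Dobzinski--Vondrak hard distribution over profiles, and for each transcript $\tau$ let $B_\tau \subseteq \mathrm{supp}(\mathcal{D})$ be the set of profiles on which $M_\tau$ fails to $m^{1/3-\delta}/20$-approximate. By the previous paragraph $\Pr_{\vec v \sim \mathcal{D}}[\vec v \in B_\tau] \ge 1 - e^{-\Omega(m^{1/3})}$. Union-bounding over all $2^{nc} \le 2^{n m^\delta}$ transcripts yields
\[
\Pr_{\vec v \sim \mathcal{D}}\!\Big[\vec v \in \bigcap_\tau B_\tau\Big] \ge 1 - 2^{n m^\delta} e^{-\Omega(m^{1/3})} > 0,
\]
since $\delta < 1/3$ and we may take $n \le \mathrm{poly}(m)$. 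Any profile $\vec v$ in this intersection is hard for every induced phase-two mechanism simultaneously, so no matter which phase-one strategies the bidders play on $\vec v$, the resulting truthful mechanism $M_\tau$ lies in $B_\tau$ and the interpolation mechanism cannot obtain the claimed approximation ratio, a contradiction. The same argument transfers verbatim to the computationally efficient case with the computational version of the failure-probability lemma.

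The main obstacle is the first step: extracting a high-probability failure bound from the Dobzinski--Vondrak sub-menu proof, not just a single bad instance per truthful mechanism. I expect this to boil down to a careful accounting of how many planted partitions are compatible with a given sub-menu of the $M_\tau$-responses, and arguing that this count is at most an $e^{-\Omega(m^{1/3})}$ fraction of all planted partitions; once this is in hand, the union bound above is immediate and both parts of the theorem follow by choosing $c < m^\delta$ and checking that $2^{nm^\delta} \cdot e^{-\Omega(m^{1/3})} < 1$ in the stated parameter regime.
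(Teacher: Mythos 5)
Your overall architecture is the same as the paper's: count the at most $2^{nm^\delta}$ phase-one transcripts and hence induced phase-two truthful mechanisms, show that each low-query (respectively, poly-time, under $RP\neq NP$) truthful mechanism fails with overwhelming probability on a random ``polar additive'' profile, and union bound to exhibit a single profile that defeats every induced mechanism. The paper packages exactly this, contrapositively, as Lemma~\ref{lem:largemenu} (with high probability over the random profile, \emph{some} induced mechanism has an exponentially large structured submenu) combined with Lemma~\ref{lem:querycom} for the query bullet and the technique of Theorem 2.1 of Dobzinski--Vondrak for the computational bullet. So you have reconstructed the right proof plan, but the step you yourself flag as the ``main obstacle'' is precisely the content of Lemma~\ref{lem:largemenu}, and your sketch of it is off in two concrete ways.

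First, the per-mechanism failure bound does not come from ``counting how many planted partitions are compatible with a given sub-menu''; it comes from the taxation principle: the menu offered to bidder $i$ depends only on $v_{-i}$, so a fresh independent additive $v_i$ (each item worth $1$ w.p.\ $1/n$, else $1/m^3$) gives every fixed menu set $S$ value above roughly $4\max\{|S|,m/n\}/n$ with probability at most $e^{-3m/n^2}$, while the optimum concentrates near $(1-1/e)m$; since a mechanism making at most $\frac{e^{m^{1/3}}}{10m^8}-1$ queries has all structured submenus (hence, after the $m^7$-bin argument, all menus) small, a union bound over menu sets, bidders, and the $2^{nm^\delta}$ mechanisms finishes. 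Second, your quantitative bookkeeping would fail as written: the construction forces $n$ to match the target ratio, $n/20=m^{1/3-\delta}/20$, so $n=m^{1/3-\delta}$ (your remark that ``we may take $n\le\poly(m)$'' is not available — larger $n$ both inflates $2^{nm^\delta}$ and weakens the construction), and then the union bound has $2^{nm^\delta}=2^{m^{1/3}}$ terms, so a per-mechanism failure probability of only $e^{-\Omega(m^{1/3})}$ does not suffice; you need the sharper bound $e^{-\Omega(m/n^2)}=e^{-\Omega(m^{1/3+2\delta})}$, and this $2\delta$ slack in the exponent is exactly why the theorem trades a $\delta$ loss in the approximation ratio for the $m^\delta$ communication bound. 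With these corrections your argument coincides with the paper's proof.
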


\begin{corollary}
Assuming $RP \neq NP$, no a priori learnable computationally efficient mechanism or a priori learnable value query mechanism that makes at most $\frac{e^{m^{1/3}}}{10m^8}-1$ queries guarantees a price of anarchy or price of stability $o(m^{1/3})$ at any solution concept that is guaranteed to exist on all profiles of submodular buyers.
\end{corollary}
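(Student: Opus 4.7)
The plan is to derive this corollary directly from Theorem~\ref{thm:valuequeries} combined with the Observation in Section~\ref{sec:prelim} that a PoA or PoS bound at any non-empty equilibrium concept implies an approximation-ratio guarantee of the same quantity. The key quantitative fact needed is that an a priori learnable interpolation mechanism has, by definition, per-bidder first-phase communication $O(\log m)$, which is eventually smaller than $m^\delta$ for every fixed $\delta > 0$.

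First I would fix an arbitrary $\delta > 0$ and argue by contradiction. Suppose a mechanism $M$ of the indicated type (either a priori learnable and computationally efficient, or a priori learnable and value-query with at most $\frac{e^{m^{1/3}}}{10m^8}-1$ queries) attains PoA or PoS bounded by $m^{1/3-\delta}/20$ on all profiles of submodular buyers at some non-empty solution concept $E$. Since $E$ is non-empty on every submodular profile, we can define phase-one strategies $s_i(v_i)$ by drawing from an equilibrium in $E$ (as in the proof of the Observation). By the Observation, these strategies witness that $M$ attains approximation ratio $m^{1/3-\delta}/20$ on every submodular profile. But then the corresponding bullet of Theorem~\ref{thm:valuequeries} — bullet~2 under $RP \neq NP$ in the computationally efficient case, bullet~1 unconditionally in the value-query case — forces $M$ to have per-bidder first-round communication at least $m^\delta$. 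This contradicts $O(\log m)$ communication per bidder, ruling out $M$. Since $\delta > 0$ was arbitrary, no such $M$ guarantees PoA or PoS at most $m^{1/3-\delta}/20$ for any $\delta > 0$, which is the intended content of the ``$o(m^{1/3})$'' phrasing in the corollary (as used throughout this paper, i.e.\ ruling out a polynomial-factor improvement over $m^{1/3}$).

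There is no real technical obstacle here; the proof is a short reduction. The one step that requires a moment of care is quantifier bookkeeping: Theorem~\ref{thm:valuequeries} is stated with $\delta$ fixed before the mechanism, whereas the corollary quantifies over all a priori learnable mechanisms first. This is handled by simply invoking the theorem once for each $\delta > 0$ and observing that the a priori learnable bound $O(\log m)$ beats the theorem's threshold $m^\delta$ for every such $\delta$ once $m$ is large. No new machinery beyond Theorem~\ref{thm:valuequeries} and the Observation is needed.
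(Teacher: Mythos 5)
Your proposal is correct and matches the paper's intended derivation: the corollary is stated without proof precisely because it follows immediately from Theorem~\ref{thm:valuequeries} together with the Observation that a PoA/PoS bound at a non-empty solution concept yields the same approximation-ratio guarantee, plus the fact that a priori learnable mechanisms have $O(\log m)$ per-bidder first-phase communication, which is below the $m^\delta$ threshold for every fixed $\delta>0$. Your handling of the quantifier order (invoking the theorem separately for each $\delta$) and of which bullet needs $RP \neq NP$ is exactly the reading the paper intends.
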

\bibliographystyle{alpha}
\bibliography{bib} 
\appendix
\section{A Second Round is Necessary}\label{app:notruth}

In this section, we show that indeed something is gained by including a second round, if the goal is a priori learnability. Specifically, we show that just a low amount of first-round communication (but no truthful communication) can't beat a $\poly(m)$-approximation. One should contrast this with the single-bid mechanism of Devanur et. al. which shows that exponentially less first-round communication plus a single-price mechanism obtains an $O(\log m)$-approximation.

\begin{theorem}
For all $\epsilon > 0$, consider any (not necessarily truthful) protocol for combinatorial auctions with communication complexity $O(m^\epsilon)$ per party. Then this protocol does not guarantee an approximation ratio better than $m^{1/2 - \epsilon}$ on all instances with 0/1-additive buyers. This bound holds against non-deterministic and randomized protocols as well.
\end{theorem}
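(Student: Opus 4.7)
The plan is a probabilistic counting argument: exhibit a distribution over $0/1$-additive inputs where the optimal welfare is $\Theta(m)$ but every fixed allocation achieves welfare only $O(\sqrt m)$ in expectation (with tight concentration), then union-bound over the small number of distinct allocations any low-communication protocol can produce.

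Concretely, take $n=\lceil\sqrt m\rceil$ bidders and draw each valuation independently by including every item in $S_i$ with probability $p=1/\sqrt m$, setting $v_i(T)=|T\cap S_i|$. Standard concentration shows $|\bigcup_i S_i|\geq m/2$ with high probability (since each item lies in some $S_i$ with probability $1-(1-p)^n\geq 1-1/e$), and because the valuations are $0/1$-additive, assigning each item to any interested bidder gives $\mathrm{OPT}\geq m/2$. Meanwhile, any deterministic protocol with $O(m^\epsilon)$ bits per party partitions the input space into at most $N=2^{O(nm^\epsilon)}=2^{O(m^{1/2+\epsilon})}$ combinatorial rectangles $A_1\times\cdots\times A_n$, on each of which the output allocation $T=(T_1,\ldots,T_n)$ is fixed. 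For such a $T$, the welfare $W(T,\vec S)=\sum_i |T_i\cap S_i|$ is a sum of at most $m$ independent $\mathrm{Bernoulli}(p)$ random variables with mean at most $pm=\sqrt m$, using that the $T_i$ are disjoint and the $S_i$ have independent coordinates. Chernoff's tail bound $\Pr[X\geq t\mu]\leq (e/t)^{t\mu}$ with $t=m^\epsilon$ then gives
\[
\Pr\bigl[W(T,\vec S)\geq m^{1/2+\epsilon}\bigr]\;\leq\;(e/m^\epsilon)^{m^{1/2+\epsilon}}=2^{-\Omega(\epsilon\,m^{1/2+\epsilon}\log m)},
\]
and the extra $\log m$ in the exponent beats $\log N=O(m^{1/2+\epsilon})$. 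A union bound over the $N$ possible output allocations leaves positive probability that \emph{every} rectangle's allocation has welfare $<m^{1/2+\epsilon}$, producing an instance with approximation ratio $\geq(m/2)/m^{1/2+\epsilon}=\Omega(m^{1/2-\epsilon})$; rescaling $\epsilon$ by a constant absorbs the slack.

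Extending to randomized and nondeterministic protocols is mostly routine. For randomized protocols, Yao's minimax principle converts the distributional lower bound into a worst-case one against any randomized protocol using the same amount of communication. For nondeterministic protocols, the key observation is that there are still at most $N$ transcripts and each corresponds to a single allocation; ``the protocol succeeds on $\vec S$'' becomes ``some transcript $t$ is consistent with $\vec S$ and the corresponding $T_t$ has welfare $\geq m^{1/2+\epsilon}$,'' so the same union bound $\sum_t \Pr[\vec S\in R_t \text{ and } W(T_t,\vec S)\geq m^{1/2+\epsilon}]$ is $o(1)$ and a worst-case input witnessing failure exists. I expect the main wrinkle to be bookkeeping constants: the $\log m$ factor in the Chernoff exponent supplies the necessary slack but must be unpacked carefully so that the bound holds cleanly for arbitrary $\epsilon>0$. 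A minor detail worth double-checking is that the rectangle-partition property is in force for multi-party protocols, which is standard but worth noting explicitly.
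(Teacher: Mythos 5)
Your proposal is correct and follows essentially the same route as the paper: a probabilistic construction of $0/1$-additive instances with optimal welfare $\Theta(m)$, a Chernoff bound showing any fixed allocation achieves welfare $O(m^{1/2+\epsilon})$ except with tiny probability, a union bound over the $2^{O(nm^{\epsilon})}$ transcript-determined allocations, and the same transcript-counting and Yao arguments for the nondeterministic and randomized extensions. The only difference is cosmetic: the paper assigns each item to exactly one uniformly random interested bidder with $n=2m^{1/2-\epsilon}$ (making OPT exactly $m$ and the union bound slack larger), whereas your independent-inclusion distribution with $n=\sqrt{m}$ needs the extra $\log m$ factor in the Chernoff exponent, which you correctly supply.
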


\begin{proof}
Let $n = 2m^{1/2 - \epsilon}$. Consider the following random valuation: Each item interests only one bidder chosen uniformly randomly and also independently from other items. Each bidder values items they are interested in at $1$, and all others at $0$.

For any fixed allocation, let's define $X_i$ to be the indicator variable that item $i$ is assigned to the bidder who is interested in this item. Then the total welfare of this allocation can be defined as $X = \sum_{i=1}^n X_i$. It's easy to see that $\E[X] = m / n$, and that the optimal allocation has welfare $m$. By Chernoff bound, we have 
\[
Pr[ X \geq 2 \E[X]]  \leq e^{-E[X] / 3} = e^{-m^{1/2+\epsilon} / 6}. 
\]
Since each party communicates only $O(m^{\epsilon})$ bits, the mechanism uses at most $2^{ n\cdot O(m^{\epsilon})}$ different allocations. And from the above probabilistic argument, we know that each allocation can only obtain a $m^{1/2 -\epsilon}$ with probability $e^{-m^{1/2+\epsilon}/6}$ on the random valuations described above. For all $\epsilon > 0$, when $m$ is large enough, 
\[
2^{ n \cdot O(m^{\epsilon})} \times e^{-m^{1/2+\epsilon}/6} < 1/m.
\]
We know that the protocol can't possibly get an $m^{1/2 -\epsilon}$ approximation on all additive valuations. Because we only counted the number of transcripts, this holds against non-deterministic protocols as well. 

In fact, we have also shown that the probability that any set of $2^{n\cdot O(m^{\epsilon})}$ allocations achieves welfare larger than $m^{1/2+\epsilon}$ is at most $1/m$. Clearly, no allocation can achieve welfare larger than $m$, so we have also shown that the expected welfare of any set of $2^{n \cdot O(m^{\epsilon})}$ allocations is at most $m^{1/2+\epsilon}$ on this distribution as well. Now we can apply Yao's minimax principle to claim that no randomized protocol beats welfare $m^{1/2+\epsilon}$ on all instances generated by our probabilistic construction. Therefore, no randomized protocol guarantees a $m^{1/2-\epsilon}$-approximation on all instances of 0/1-additive buyers.
\end{proof}

\section{Omitted Proofs from Section~\ref{sec:MIR}}\label{app:MIR}

In this section, we provide a proof of Theorem~\ref{thm:MIR}. Our proof technique is based on techniques developed in~\cite{DanielySS15}. In \cite{DanielySS15}, they present a generalization of VC dimension and used it to unify all previous inapproximability results for maximal-in-range mechanisms. 

\textbf{Proof overview:} We will first show that if the first-round communication $C \leq m^{\delta}$, then the auctioneer is choosing a truthful mechanism from exponentially many but not too many maximal-in-range mechanisms. And then we generalize the results in~\cite{DanielySS15} to show that if the mechanism is potentially running a bounded number of maximal-in-range mechanisms and this mechanism has good approximation ratio, then one of the maximal-in-range mechanism it uses will shatter a large set. Finally, we will use results from~\cite{DanielySS15} to show that shattering a large set implies inefficiency in computation or communication by reducing from hard problems. 

The results of Daniely et. al. are not such that we can apply them in a black-box manner, and we must actually introduce and mildly generalize their techniques. We do so below.

\subsection{Notations}
\begin{definition}[\cite{DanielySS15}]
(Duplicate Allocation) A $d$-duplicate allocation of a set $X$ to indices $Y$ is a collection $\{S_y\}_{y \in Y}$ of subsets of $X$ such that every $x\in X$ belongs to at most $d$ subsets from $\{S_y\}_{y\in Y}$. We use $P_d(X,Y)$ to denote the set of all $d$-duplicate allocations. If $d=1$, we will use $P(X,Y)$. 
\end{definition}

\begin{definition}
(Maximal-In-Range Mechanism) Given a allocation set $\mathcal{H} \subseteq P_d(X,Y)$, A maximal-in-range (MIR or VCG-based) mechanism outputs an allocation $S$ that maximizes $\sum_{i=1}^n v_i(S_i)$ over all allocations in $\mathcal{H}$. The mechanism will charge the VCG price (induced by $\mathcal{H}$) to each bidder to ensure truthfulness.
\end{definition}

\begin{definition}[\cite{DanielySS15}]
(Shattering) Let $\mathcal{H} \subseteq P_d(X,Y)$. A pair of subsets $S \subseteq X$, $A \subseteq Y$ is shattered by $\mathcal{H}$, if $A^S = \mathcal{H}_{S,A}$. Here $\mathcal{H} _{S,A} :=\{f:S\rightarrow A |\exists \{S_y\}_{y \in Y} \in \mathcal{H} ~\forall y\in A,f^{-1}(y) = S_y \cap S\}$. That is, $\mathcal{H}_{S, A}$ is the set of allocations of items in $S$ to players in $A$ that are projections of allocations in $\mathcal{H}$ onto items in $S$ and players in $A$. $S$ and $A$ are shattered by $\mathcal{H}$ if $\mathcal{H}$ induces every possible allocation of $A$ to $S$.
\end{definition}

\begin{definition}[\cite{DanielySS15}]
(Generalization of VC-dimension) Let $\mathcal{H} \subseteq Y^X$. $A$ is $k$-shattered by $\mathcal{H}$ if $\forall a\in A$, $\exists Y_a \subseteq Y$ of size $k$ and the following holds: For each $a\in A$, choose $y_a \in Y_A$ arbitrarily, then no matter how you choose these $y_a$'s, we can find $f\in \mathcal{H}$ such that $\forall a \in A, f(a) = y_a$. We define $\Dim_k(\mathcal{H})$ as the maximal cardinality of a $k$-shattered set. 
\end{definition}

\begin{definition}[\cite{DanielySS15}]
(Approximate Containment Let $\mathcal{H}\subseteq P_d(X,Y)$ and $\alpha \geq 1$. $\mathcal{H}$ has the $\alpha$-containment property if $\forall$ allocations $\{S_y\}_{y\in Y}$ of $X$, $\exists \{ T_y\}_{y\in Y} \in \mathcal{H}$ such that $|\{y | \emptyset \neq S_y \subseteq T_y\}| \geq \frac{1}{\alpha} |\{y|\emptyset \neq S_y\}|$. In other words, $\mathcal{H}$ has the $\alpha$-containment property if for any possible allocation, $S$, there is an allocation $T$ in $\mathcal{H}$ such that rat least a $1/\alpha$ fraction of players who get something in $S$ get in $T$ at least what they got in $S$.
\end{definition}

\begin{definition}[\cite{DanielySS15}]
(Approximate Intersection)  Let $\mathcal{H}\subseteq P_d(X,Y)$ and $\alpha \geq 1$. $\mathcal{H}$ has the $\alpha$-intersection property if $\forall$ allocations $\{S_y\}_{y\in Y}$ of $X$, $\exists \{ T_y\}_{y\in Y} \in \mathcal{H}$ such that $\sum_{y \in Y} |S_y \cap T_y| \geq \frac{1}{\alpha}\sum_{y \in Y} |S_y|$.
\end{definition}

\subsection{Proof}
We first prove the following simple lemma to show that if the first-round communication $C \leq m^{\delta}$ per bidder, the protocol is choosing a truthful mechanism from exponentially many but not too many maximal-in-range mechanisms. It is easy to get the following observations:
\begin{observation}
\label{ob:num}
If a protocol has first-round communication $ m^{\delta}$ per bidder, then the number of possible truthful mechanisms that can be selected for phase two is at most $2^{n\cdot m^{\delta}}$.
\end{observation}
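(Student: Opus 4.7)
The plan is to prove this by a straightforward counting argument on the transcript of the first-phase protocol. By the definition of an interpolation mechanism, the truthful mechanism used in phase two is a deterministic function of the transcript of the phase-one protocol $P$ (if $P$ is randomized, we condition on the random coins, which only makes the bound tighter for a worst-case profile). So to bound the number of distinct second-phase mechanisms that can arise, it suffices to bound the number of distinct first-phase transcripts.

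First I would note that since each of the $n$ bidders sends at most $m^{\delta}$ bits during phase one, the total number of bits exchanged in any single execution of $P$ is at most $n \cdot m^{\delta}$. The set of all possible transcripts is therefore a subset of $\{0,1\}^{\leq n m^{\delta}}$, which has cardinality at most $2^{n m^{\delta}+1}$; absorbing the constant gives the claimed bound of $2^{n \cdot m^{\delta}}$ (or one can observe that a prefix-free transcript code has at most $2^{n m^\delta}$ leaves). Since the induced map from transcripts to phase-two mechanisms can only decrease the number of distinct outputs, the number of mechanisms in the support is at most $2^{n \cdot m^{\delta}}$.

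There is no real obstacle here — the statement is essentially a restatement of the communication budget, and the only mild subtlety is being careful that ``per-bidder communication $m^{\delta}$'' refers to the total bits sent by each bidder across the whole first phase (not per round), so that the total transcript length is bounded by $n \cdot m^{\delta}$. If randomization of the protocol or of the bidders is allowed, one can either (i) treat the random coins as part of the transcript and conclude the same bound on the support of possible mechanisms, or (ii) fix the worst-case realization of coins for the downstream shattering argument; either option feeds cleanly into the subsequent application that uses this count to upper bound how many MIR mechanisms the designer is effectively choosing from.
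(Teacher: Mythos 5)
Your proof is correct and follows exactly the argument the paper has in mind (the paper treats this as an immediate observation, and the same transcript-counting step appears explicitly in the proof of Theorem~\ref{thm:manyprices}): the phase-two mechanism is a function of the phase-one transcript, and with $n$ bidders each sending at most $m^{\delta}$ bits there are at most $2^{n\cdot m^{\delta}}$ transcripts. Your remarks on prefix-freeness and on handling randomness are fine but not needed beyond what the paper implicitly assumes.
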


It is easy to get the following observation from the definition of MIR mechanisms:
\begin{observation}
\label{ob:mir}
If phase two is choosing from $t$ MIR mechanisms with sets of allocations $\mathcal{H}^1,...,\mathcal{H}^t$, then the resulting no matter how phase one selects an MIR mechanism, the combined interpolation mechanism does no better than the MIR mechanism with set of allocations $\mathcal{H} = \cup_{i=1}^t \mathcal{H}^i$. 
\end{observation}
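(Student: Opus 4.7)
\textbf{Proof proposal for Observation~\ref{ob:mir}.} The plan is to argue by a simple subset/monotonicity argument on what an MIR mechanism can achieve as one enlarges its range of allocations. The key observation is that an MIR mechanism on range $\mathcal{H}$ always outputs, on input $\vec v$, an allocation in $\arg\max_{x\in\mathcal{H}}\textsc{Welfare}(x,\vec v)$, and in particular its welfare is monotone nondecreasing in $\mathcal{H}$; moreover, an MIR mechanism is truthful so there are no strategic issues in phase two.

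Concretely, I would proceed as follows. First, fix any interpolation mechanism whose phase-two output is always one of the MIR mechanisms $M^1,\ldots,M^t$ with ranges $\mathcal{H}^1,\ldots,\mathcal{H}^t$. Suppose this interpolation mechanism guarantees approximation ratio $\alpha$, i.e. there exist phase-one strategies $s_i(v_i)$ such that for every profile $\vec v$, when buyers play $s_i(v_i)$ in phase one and then truthfully in phase two, the expected welfare is at least $\textsc{OPT}(\vec v)/\alpha$. For a fixed $\vec v$, let $D(\vec v)$ denote the (possibly random) distribution over indices $\{1,\ldots,t\}$ induced by the phase-one strategies $s_i(v_i)$ and the protocol's internal randomness. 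Since the phase-two mechanism is MIR and bidders report truthfully, the welfare realized once the index $i$ is drawn equals $\max_{x\in\mathcal{H}^i}\textsc{Welfare}(x,\vec v)$. Therefore
\[
\mathbb{E}_{i\sim D(\vec v)}\Bigl[\max_{x\in\mathcal{H}^i}\textsc{Welfare}(x,\vec v)\Bigr]\ \geq\ \textsc{OPT}(\vec v)/\alpha.
\]

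Next, I compare this to the single MIR mechanism on $\mathcal{H}=\bigcup_{i=1}^t\mathcal{H}^i$. Since each $\mathcal{H}^i\subseteq\mathcal{H}$, we have the pointwise inequality $\max_{x\in\mathcal{H}^i}\textsc{Welfare}(x,\vec v)\leq\max_{x\in\mathcal{H}}\textsc{Welfare}(x,\vec v)$ for every $i$, hence
\[
\max_{x\in\mathcal{H}}\textsc{Welfare}(x,\vec v)\ \geq\ \mathbb{E}_{i\sim D(\vec v)}\Bigl[\max_{x\in\mathcal{H}^i}\textsc{Welfare}(x,\vec v)\Bigr]\ \geq\ \textsc{OPT}(\vec v)/\alpha.
\]
Because the MIR mechanism with range $\mathcal{H}$ achieves exactly $\max_{x\in\mathcal{H}}\textsc{Welfare}(x,\vec v)$ on the profile $\vec v$ (bidders have a dominant strategy to report truthfully, so no phase-one strategizing is needed), this single MIR mechanism attains an $\alpha$-approximation on every $\vec v$. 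That is precisely the claim: the interpolation mechanism is no better than the union-range MIR mechanism.

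There is essentially no obstacle here; the statement is a direct consequence of the monotonicity of the ``max over $\mathcal{H}$'' operator under inclusion, combined with the fact that MIR mechanisms are truthful so phase-two play is trivial. The only small care needed is to handle the randomized case (the phase-one transcript, and hence the chosen MIR index, may depend on bidders' randomness and the protocol's coins); taking expectations as above handles this. No other subtlety arises, and the observation then feeds directly into the proof of Theorem~\ref{thm:MIR} by letting the auctioneer be analyzed as effectively running a single MIR mechanism whose range is the union of at most $2^{n\cdot m^{\delta}}$ ranges (by Observation~\ref{ob:num}).
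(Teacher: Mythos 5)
Your argument is correct and is exactly the reasoning the paper intends: the paper states this observation without proof as immediate from the definition of MIR mechanisms, and your monotonicity-of-the-max argument (each realized phase-two welfare is $\max_{x\in\mathcal{H}^i}\textsc{Welfare}(x,\vec v)\leq\max_{x\in\mathcal{H}}\textsc{Welfare}(x,\vec v)$, then take expectations over the phase-one randomness) is the natural way to spell it out. No gap; your handling of the randomized case is the only care needed and you address it properly.
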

By this observation, we will start to consider the MIR mechanism with set of allocations $\mathcal{H}$. We are going to show that if the MIR mechanism with set of allocations $\mathcal{H}$ has good approximation ratio, then one of the maximal-in-range mechanism among $\mathcal{H}^1,...,\mathcal{H}^t$ will shatter a large set. We need the generalization of the Sauer-Shelah lemma from \cite{DanielySS15}:
\begin{theorem}[\cite{DanielySS15}]
 For every $\mathcal{H} \subseteq Y^X$, and every $2\leq k\leq |Y|$,
$$
|\mathcal{H}| \leq \sum_{i=0}^{Dim_k(\mathcal{H})}\binom{|X|}{i}(k-1)^{|X|-i}\binom{|Y|}{k}^i$$
$$ \leq |X|^{\Dim_k(\mathcal{H})}|Y|^{k\Dim_k(\mathcal{H})}(k-1)^{|X|}.$$
\end{theorem}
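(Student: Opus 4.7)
The plan is to establish the theorem by induction on $|X|$, handling the two inequalities separately. The rightmost inequality is a routine algebraic estimate: each term of the sum satisfies $\binom{|X|}{i}(k-1)^{|X|-i}\binom{|Y|}{k}^i \leq |X|^{i} |Y|^{ki}(k-1)^{|X|}$, and summing over $i \leq \Dim_k(\mathcal{H})$ absorbs into $|X|^{\Dim_k(\mathcal{H})}|Y|^{k\Dim_k(\mathcal{H})}(k-1)^{|X|}$ once one accounts for the number of terms. So the substantive work lies in the left inequality $|\mathcal{H}| \leq B(|X|, d)$, where I abbreviate $d = \Dim_k(\mathcal{H})$ and $B(n, r) = \sum_{i=0}^r \binom{n}{i}(k-1)^{n-i}\binom{|Y|}{k}^i$.

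For the left inequality I induct on $|X|$. The base $|X| = 0$ is immediate since $|\mathcal{H}| \leq 1 = B(0, d)$. For the inductive step, fix $x_0 \in X$, set $X' = X \setminus \{x_0\}$, and write $\mathcal{H}' = \mathcal{H}|_{X'}$ for the projection. For each $f' \in \mathcal{H}'$, let $Y_{f'} = \{y \in Y : \exists f \in \mathcal{H},\ f|_{X'} = f',\ f(x_0) = y\}$, so that $|\mathcal{H}| = \sum_{f' \in \mathcal{H}'} |Y_{f'}|$. Split $\mathcal{H}' = \mathcal{P} \sqcup \mathcal{Q}$ with $\mathcal{P} = \{f' : |Y_{f'}| \geq k\}$ and $\mathcal{Q}$ the rest. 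Bounding $|Y_{f'}|$ by $k-1$ on $\mathcal{Q}$ and by $|Y|$ on $\mathcal{P}$, and using $|Y| - k + 1 \leq \binom{|Y|}{k}$ (valid for $2 \leq k \leq |Y|$), gives $|\mathcal{H}| \leq (k-1)|\mathcal{H}'| + \binom{|Y|}{k}|\mathcal{P}|$. Applying the inductive hypothesis to $\mathcal{H}'$ (dimension $\leq d$) and to $\mathcal{P}$ (dimension $\leq d-1$, conditional on the claim below) yields $|\mathcal{H}| \leq (k-1)B(|X|-1, d) + \binom{|Y|}{k}B(|X|-1, d-1)$, which equals $B(|X|, d)$ by Pascal's identity applied term by term.

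The crux is the claim $\Dim_k(\mathcal{P}) \leq d - 1$. The intuition is that any $k$-shattering witness $(A', \{Y_a\}_{a \in A'})$ for $\mathcal{P}$ should extend to a $k$-shattering witness for $\mathcal{H}$ on $A' \cup \{x_0\}$, forcing $|A'| \leq d-1$. The delicate point is choosing the shattering subset at $x_0$: since $Y_{f'}$ depends on $f'$, no uniform $k$-subset $Y_{x_0} \subseteq Y$ need work simultaneously for every assignment to $A'$. I would resolve this by a Frankl-style shifting preprocessing on $\mathcal{H}$ at coordinate $x_0$: iteratively replace the value $f(x_0)$ by its image under swaps that push $Y_{f'}$ toward a canonical $k$-subset $Y^* \subseteq Y$, whenever such a swap keeps $\mathcal{H}$ a well-defined set of distinct functions. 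Each shift preserves $|\mathcal{H}|$ and can only decrease $\Dim_k$, and in the shifted family $\mathcal{P}$ is uniform in the sense that $Y^* \subseteq Y_{f'}$ for every $f' \in \mathcal{P}$, so one may take $Y_{x_0} = Y^*$. The main obstacle is verifying that shifting does not create new $k$-shattered sets while correctly canonicalizing the $Y_{f'}$'s — this is the same subtlety that requires care in the classical Sauer-Shelah shifting proof, but because $k$-shattering is checkable coordinate-wise the argument adapts.
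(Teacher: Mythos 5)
First, note that the paper does not prove this theorem at all---it imports it verbatim from~\cite{DanielySS15}---so the comparison here is with the known argument, not with an in-paper proof. Your skeleton (fiber decomposition at a coordinate $x_0$, the count $|\mathcal{H}| \leq (k-1)|\mathcal{H}'| + \binom{|Y|}{k}|\mathcal{P}|$, the Pascal-type recursion for $B(n,d)$, and the routine second inequality) is fine, but the crux claim $\Dim_k(\mathcal{P}) \leq \Dim_k(\mathcal{H})-1$ is \emph{false}, and the shifting argument you propose to rescue it fails as well. For the claim itself: take $X=\{x_0,x_1\}$, $Y=\{1,2,3\}$, $k=2$, and $\mathcal{H}=\{(1,1),(2,1),(2,2),(3,2)\}$, where a pair denotes $(f(x_0),f(x_1))$. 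Both fibers over $x_1$ have size $2$, so $\mathcal{P}=\mathcal{H}'$ and $\mathcal{P}$ $2$-shatters $\{x_1\}$, i.e.\ $\Dim_2(\mathcal{P})=1$; but $\mathcal{H}$ contains no combinatorial rectangle, so $\Dim_2(\mathcal{H})=1$ and the needed bound $\Dim_2(\mathcal{P})\leq 0$ fails---precisely because the two fibers $\{1,2\}$ and $\{2,3\}$ share no common $2$-subset, which is the uniform witness $Y_{x_0}$ your extension argument requires. For the shifting rescue: an elementary swap at $x_0$ can \emph{increase} $\Dim_k$. With $\mathcal{H}=\{(2,1),(1,2),(3,1),(3,2)\}$ (so $\Dim_2(\mathcal{H})=1$), pushing the value $2$ at $x_0$ down to $1$ replaces $(2,1)$ by $(1,1)$ and yields $\{(1,1),(1,2),(3,1),(3,2)\}$, which $2$-shatters $\{x_0,x_1\}$ with witnesses $\{1,3\}$ and $\{1,2\}$. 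So ``each shift can only decrease $\Dim_k$'' is not true: unlike the binary Sauer--Shelah case, after a shift the shifted-in value and the surviving values at $x_0$ need not be jointly realized against the same assignments on the other coordinates, and $k$-shattering demands a single $k$-set $Y_{x_0}$ valid for all of them.

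The correct repair---and essentially the argument behind the cited result---keeps your induction but replaces the single class $\mathcal{P}$ by one class per $k$-subset $T\subseteq Y$, namely $\mathcal{H}_T=\{f'\in\mathcal{H}' : T\subseteq Y_{f'}\}$. For these the dimension drop is immediate: if $A'$ is $k$-shattered by $\mathcal{H}_T$, then $A'\cup\{x_0\}$ is $k$-shattered by $\mathcal{H}$ with witness $Y_{x_0}=T$, so $\Dim_k(\mathcal{H}_T)\leq d-1$. Since $\binom{s}{k}\geq s-k+1$, every fiber satisfies $|Y_{f'}|\leq (k-1)+\#\{T : T\subseteq Y_{f'}\}$, whence $|\mathcal{H}|=\sum_{f'}|Y_{f'}|\leq (k-1)|\mathcal{H}'|+\sum_{T}|\mathcal{H}_T|\leq (k-1)B(|X|-1,d)+\binom{|Y|}{k}B(|X|-1,d-1)$, and your Pascal identity then closes the induction exactly as you intended. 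As written, however, your proof has a genuine, non-patchable-by-shifting gap at its key lemma.
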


We will generalize Theorem 2.6 and Theorem 2.8 in \cite{DanielySS15} as the following two theorems.
\begin{theorem}
\label{thm:containment}
Let $\epsilon > 0$, $\mathcal{H}^1,...,\mathcal{H}^t \subset P_d(X,Y)$ and $\mathcal{H} = \cup_{i=1}^t \mathcal{H}^i$. Assume that $n = m^{1 -3\epsilon/4}$ and $t =e^{\frac{m^{3\epsilon/4}}{4}}$. If $\mathcal{H}$ has the $m^{1-\epsilon}$-containment property, then there exists $i \in [t]$ and a shattered pair $S\subseteq X$, $A \subseteq Y$ of sizes $\tilde{\Omega}(m^{3\epsilon/4})$ and $m^{\epsilon/4}$, such that the pair $S$,$A$ is shattered by $\mathcal{H}^i_{S,A}$.
\end{theorem}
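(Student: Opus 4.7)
The plan is to mirror Daniely et al.'s proof of the $t=1$ case (Theorem 2.6 in \cite{DanielySS15}), inserting a single pigeonhole step to handle the $t$-fold union. Since $\mathcal{H} = \cup_{i=1}^t \mathcal{H}^i$ itself inherits the $m^{1-\epsilon}$-containment property, every step the original argument performs on $\mathcal{H}$ up to the invocation of Sauer-Shelah carries over unchanged. Concretely, I expect that part of the proof to fix subsets $A \subseteq Y$ with $|A| = m^{\epsilon/4}$ and $S \subseteq X$ with $|S| = \tilde{\Omega}(m^{3\epsilon/4})$ such that the projected class $\mathcal{H}_{S,A}$ has size at least $(k-1)^{|S|}\cdot M$, where $k = m^{\epsilon/4}$ and $\log M = \tilde{\Omega}(m^{3\epsilon/4}\log m)$. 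The intuition is that the $m^{1-\epsilon}$-containment property forces $\mathcal{H}$ to approximately cover every function $f : S \to A$, and the calibrations $|A| = m^{\epsilon/4}$, $n = m^{1-3\epsilon/4}$ ensure that distinct $f$'s produce distinct projected witnesses.

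Next I apply pigeonhole over the $t$ classes: $|\mathcal{H}_{S,A}| \leq \sum_{i=1}^t |\mathcal{H}^i_{S,A}|$ yields some index $i^*$ with $|\mathcal{H}^{i^*}_{S,A}| \geq |\mathcal{H}_{S,A}|/t$. Since $t = e^{m^{3\epsilon/4}/4}$, taking logarithms only subtracts $m^{3\epsilon/4}/4$ from $\log|\mathcal{H}_{S,A}|$, which is absorbed by the $\tilde{\Omega}(m^{3\epsilon/4}\log m)$ slack above. I then invoke the generalized Sauer-Shelah lemma on $\mathcal{H}^{i^*}_{S,A} \subseteq A^S$ and rearrange
\[
|\mathcal{H}^{i^*}_{S,A}| \leq |S|^{\Dim_k(\mathcal{H}^{i^*}_{S,A})}|A|^{k\Dim_k(\mathcal{H}^{i^*}_{S,A})}(k-1)^{|S|}
\]
with $k = m^{\epsilon/4} = |A|$ to conclude $\Dim_k(\mathcal{H}^{i^*}_{S,A}) = \tilde{\Omega}(m^{3\epsilon/4})$. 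The crucial point is that because $k = |A|$, every set $Y_b$ in the $k$-shattering definition must equal $A$, so a $k$-shattered subset $S' \subseteq S$ of this size yields $A^{S'} = \mathcal{H}^{i^*}_{S',A}$, i.e., the pair $(S', A)$ is shattered by $\mathcal{H}^{i^*}$ in the sense of the first definition, which is exactly what the theorem demands.

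The main obstacle is the first step: extracting from the weak $m^{1-\epsilon}$-containment property a lower bound on $|\mathcal{H}_{S,A}|$ strong enough to survive both the Sauer-Shelah inversion and the $\log t = m^{3\epsilon/4}/4$ pigeonhole loss. Because containment only promises that a $1/m^{1-\epsilon}$ fraction of non-empty assignments are covered (and only in the ``$\subseteq$'' direction), one must verify that many different reference functions $f : S \to A$ do not end up with identical projected witnesses on $(S,A)$. The setting $n = m^{1-3\epsilon/4}$ appears calibrated precisely so that, after paying the $m^{1-\epsilon}$ containment slack against $n$ players, a $\tilde{\Omega}(m^{3\epsilon/4})$-sized subset of player indices still reflects the chosen $f$. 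Once that quantitative input is secured, the pigeonhole and Sauer-Shelah steps are mechanical.
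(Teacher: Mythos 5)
Your high-level skeleton matches the paper's: lower-bound the size of a projected class $\mathcal{H}_{S,A}$ using the containment property, pigeonhole over the $t$ pieces, invert the generalized Sauer--Shelah lemma on the largest $\mathcal{H}^{i^*}_{S,A}$, and use $k=|A|$ to turn $k$-shattering into shattering of the pair. However, the quantitative core is miscalibrated in two places, and as stated the argument does not reach the bound $\tilde\Omega(m^{3\epsilon/4})$. First, the set $S$ on which you project cannot have size $\tilde\Theta(m^{3\epsilon/4})$: since $\mathcal{H}_{S,A}\subseteq A^S$ we always have $|\mathcal{H}_{S,A}|\le k^{|S|}=(k-1)^{|S|}\bigl(\tfrac{k}{k-1}\bigr)^{|S|}\le (k-1)^{|S|}e^{|S|/(k-1)}$, so with $|S|=\tilde\Theta(m^{3\epsilon/4})$ and $k=m^{\epsilon/4}$ your surplus factor $M$ satisfies $\log M \le \tilde O(m^{\epsilon/2})$. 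This is both far below your claimed $\tilde\Omega(m^{3\epsilon/4}\log m)$ (that claim is literally impossible) and, worse, it is swamped by the pigeonhole loss $\log t = m^{3\epsilon/4}/4$, so the argument dies at exactly the step you flagged as the ``main obstacle.'' The paper avoids this by taking $|S| = k\cdot m/n = m^{\epsilon}$ (the union of the $k$ prescribed bundles in the random balanced partition), which yields $|\mathcal{H}_{S,A}|\ge k^{m^\epsilon}m^{-k}m^{-\epsilon k}$, i.e.\ surplus $e^{\Theta(m^{3\epsilon/4})}$ over $(k-1)^{m^\epsilon}$ --- enough to absorb $\log t$. The shattered set of size $\tilde\Omega(m^{3\epsilon/4})$ then appears only at the end, as a $k$-shattered \emph{subset} of this larger $S$ witnessing $\Dim_k(\mathcal{H}^{i^*}_{S,A})$; it is not the projection set itself.

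Second, even granting the correct counting bound, your Sauer--Shelah inversion uses the weaker form $|S|^{D}|A|^{kD}(k-1)^{|S|}$, whose $|A|^{kD}=k^{kD}$ term puts $k\log k \approx m^{\epsilon/4}\log m$ in the denominator and yields only $\Dim_k=\tilde\Omega(m^{\epsilon/2})$. Because $|A|=k$ exactly, one must use the first inequality of the lemma, where $\binom{|A|}{k}^i=1$, giving the bound $(k-1)^{|S|}\,|S|^{D}$; the denominator is then just $\log|S| = \epsilon\log m$, and the inversion produces $\Dim_k = \tilde\Omega(m^{3\epsilon/4})$ as required. Your final observation (that $k=|A|$ forces every $Y_a$ to equal $A$, so $k$-shattering of a set $S'$ gives $A^{S'}=\mathcal{H}^{i^*}_{S',A}$) is correct and matches the paper, but the two calibration errors above mean the proposal as written proves a strictly weaker statement (shattered set of size $\tilde\Omega(m^{\epsilon/2})$ at best), and with your literal choice of $|S|$ it proves nothing once $t$ is as large as $e^{m^{3\epsilon/4}/4}$.
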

\begin{proof}
Let $k = m^{\epsilon/4}$. Let $\{T_y\}_{y\in Y}$ to be  a partition that is uniformly randomly chosen from all the partitions that give $\frac{m}{n} = m^{3\epsilon/4}$ items to each index. Let $A$ to be chosen from all the sets of indices of size $k$ uniformly randomly and independently from $\{T_y\}_{y \in Y}$.

Since $\mathcal{H}$ has the $m^{1-\epsilon}$-containment property, there is a partition $\{S_y\}_{y\in Y} \in \mathcal{H}$ such that $T_y \subseteq S_y$ for at least $k$ $y$'s. We call these $y$'s are covered. The probability that $T_y \subseteq S_y$ for all $y \in A$ is $\geq \frac{1}{\binom{n}{k}} \geq \frac{1}{\binom{m}{k}} \geq m^{-k}$. By linearity of expectation, we can find a fixed set $A$ of $k$ indices such that with probability  $\geq m^{-k}$ (over the randomness in $\{T_y\}_{y \in Y}$), $A$ is covered. We fix this $A$. 

By linearity of expectation again, there exists a set $S \subset X$ and $|S| = m^{\epsilon}$ such that if we only count $A$ as covered when $\cup_{y \in A}T_y = S$, the probability that $A$ is covered is still $\geq m^{-k}$. Fix this $S$. If a function $f:S \rightarrow A$ is uniformly randomly chosen from all functions with the property that $\forall i,j\in A$, $|f^{-1}(i)| = |f^{-1}(j)|$, then with probability $\geq m^{-k}$, $f\in \mathcal{H}_{S,A}$.  The number of these functions is $k^{m^{\epsilon}}\cdot m^{-\epsilon k}$, $m^{-\epsilon k}$ comes from the fact that a random function from $S$ to $A$ will satisfy $\forall i,j\in A$, $|f^{-1}(i)| = |f^{-1}(j)|$ with probability $m^{-\epsilon k}$. 

Then we have $|\mathcal{H}_{S,A}| \geq k^{m^{\epsilon}} \cdot m^{-k} \cdot m^{-\epsilon k}$. By an averaging argument, there exists $i \in [t]$, such that $|\mathcal{H}^i_{S,A}| \geq \frac{1}{t} \cdot k^{m^{\epsilon}} \cdot m^{-k} \cdot m^{-\epsilon k}$. Fix this $i$.

By the generalized Sauer-Shelah Lemma above, we have (note that we are using just the first inequality in the Theorem statement, and that $|Y| = k$)
\[
|\mathcal{H}^i_{S,A}|   \leq (k-1)^{m^{\epsilon}} \cdot m^{\epsilon  \Dim_k(\mathcal{H}_{S,A}^i)}.
\]
Then we have
\begin{eqnarray*}
m^{\epsilon   \Dim_k(\mathcal{H}_{S,A}^i)} &\geq& \frac{1}{t} \cdot (\frac{k}{k-1})^{m^{\epsilon}}\cdot m^{-k}\cdot m^{-\epsilon k}\\
&\geq& e^{\ln(1+\frac{1}{k-1})\cdot m^{\epsilon} - \ln t - 2k\ln(m)}\\
&\geq& e^{\frac{1}{2k}\cdot m^{\epsilon} - \frac{m^{3\epsilon/4}}{4} - 2k\ln(m)}\\
&=& e^{\frac{m^{3\epsilon/4}}{4} - 2m^{\epsilon/4}\ln(m)}.\\
\end{eqnarray*}
By taking logarithms on both side, we have
\[
\Dim_k(\mathcal{H}^i_{S,A} ) = \tilde{\Omega}(m^{3\epsilon/4}).
\]
\end{proof}

\begin{theorem}
\label{thm:intersection}
Let $k \geq 2$ be some integer, $0 < \epsilon < \frac{1}{k+1}$, $\mathcal{H}^1,...,\mathcal{H}^t \subset P(X,Y)$ and $\mathcal{H} = \cup_{i=1}^t \mathcal{H}^i$. Assume $n = m^{\frac{1}{k+1} -\epsilon}$ and $t = (1+k)^{\frac{m}{n}}$. If $\mathcal{H}$ has the $\frac{m^{\frac{1}{k+1} -\epsilon}}{1+2k}$-intersection property, then there exists $i \in [t]$ and a shattered pair $S\subseteq X$, $A \subseteq Y$ of sizes $\frac{m^{(k+1)\epsilon}}{\log_2(m)}$ and $k$, such that the pair $S$,$A$ is shattered by $\mathcal{H}^i_{S,A}$.
\end{theorem}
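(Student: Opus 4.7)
The plan is to mirror the proof of Theorem~\ref{thm:containment} but substitute the direct combinatorial conclusion of the $\alpha$-containment property with a concentration-based extraction suited to the weaker $\alpha$-intersection hypothesis. Set $\ell = m/n = m^{k/(k+1)+\epsilon}$ and $\alpha = n/(1+2k)$. I would first work with $\mathcal{H} = \bigcup_{i=1}^t \mathcal{H}^i$, which inherits the intersection property, and postpone splitting into the $t$ classes until the final averaging step. Sample a uniformly random balanced equipartition $\{T_y\}_{y \in Y}$ of $X$ with $|T_y| = \ell$ and an independent uniformly random $A \subseteq Y$ with $|A| = k$. Viewing $\{T_y\}$ as a function $f_T \colon X \to Y$, the intersection hypothesis produces some $\{S_y\} \in \mathcal{H}$ whose associated function $f_S$ agrees with $f_T$ on at least $(1+2k)\ell$ inputs.

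Second, I would perform a double-averaging argument over the random partition and the random $A$ to fix $A$ and a candidate set $S^* \subseteq X$ such that, with non-negligible probability over the remaining randomness in $\{T_y\}$, a constant fraction of the agreement $\sum_y |T_y \cap S_y|$ is witnessed on coordinates $x \in S^*$ with $f_T(x) \in A$. Equivalently, many random $\{T_y\}$'s induce, through the intersection property, a member of $\mathcal{H}$ whose restriction to $(S^*, A)$ coincides with the restriction of $\{T_y\}$ itself. Counting the number of distinct balanced partitions of $S^*$ among the $k$ indices of $A$ then yields a bound of the form $|\mathcal{H}_{S^*,A}| \geq k^{|S^*|}/(\text{low-order factors})$, and averaging over the $t = (1+k)^{\ell}$ classes identifies some $i$ with $|\mathcal{H}^i_{S^*,A}| \geq |\mathcal{H}_{S^*,A}|/t$.

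Finally, I would apply the generalized Sauer-Shelah lemma to $\mathcal{H}^i_{S^*,A} \subseteq A^{S^*}$ with parameter $k = |A|$. Because $|A| = k$, a $k$-shattered subset of $S^*$ is automatically fully shattered by $\mathcal{H}^i$. The choice $t = (1+k)^{\ell}$ is calibrated so that $\ln t$ absorbs the loss from splitting while leaving the $k^{|S^*|}$ lower bound dominant; the resulting arithmetic should force $\Dim_k(\mathcal{H}^i_{S^*,A}) \geq m^{(k+1)\epsilon}/\log_2(m)$, yielding a fully shattered subset of the claimed size. The main obstacle will be the concentration step. In Theorem~\ref{thm:containment}, containment directly supplies $k$ entire blocks $T_y$ contained in their $S_y$ counterparts, so the equipartition of $\bigcup_{y \in A} T_y$ is exactly induced by $\mathcal{H}$; here the $(1+2k)\ell$ agreement mass may be spread arbitrarily across all $n$ blocks and $m$ coordinates, and one must concentrate it onto the $k$ indices of $A$ and a common small set $S^*$ via a careful probabilistic argument. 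The $\log_2(m)$ overhead in the size of the shattered set is the price paid for this concentration, and getting the constant $(1+2k)$ in the intersection hypothesis to exactly match what Sauer-Shelah needs is the delicate quantitative bookkeeping.
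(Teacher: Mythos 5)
Your plan diverges from the paper's proof, and the step you yourself flag as ``the main obstacle'' is a genuine gap rather than a technicality. The bridge you propose --- ``a constant fraction of the agreement $\sum_y |T_y\cap S_y|$ is witnessed on coordinates $x\in S^*$ with $f_T(x)\in A$; equivalently, the restriction of some member of $\mathcal{H}$ to $(S^*,A)$ coincides with $f_T|_{S^*}$'' --- is not an equivalence. By the definition of $\mathcal{H}_{S,A}$, to induce $f_T|_{S^*}$ you need $x\in S_{f_T(x)}$ for \emph{every} $x\in S^*$, i.e.\ full agreement on $S^*$, whereas the $\frac{n}{1+2k}$-intersection property only guarantees a $\frac{1+2k}{n}$-fraction of coordinate-wise agreement overall, and the adversarial family $\mathcal{H}$ may place that agreement depending on $T$ so that no fixed $S^*$ is ever fully covered. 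This is exactly where mimicking Theorem~\ref{thm:containment} breaks down: containment hands you $k$ blocks with $T_y\subseteq S_y$, so once the random $A$ hits the covered indices full agreement on $\bigcup_{y\in A}T_y$ is automatic, while intersection never guarantees full agreement on even one block. The paper avoids this entirely: it invokes the lemma of Buchfuhrer et al.\ as a black box to get a set $S'$ with $|S'|\geq km/n$ and $|\mathcal{H}_{S',Y}|\geq (1+k)^{km/n}$ (projections onto the \emph{full} index set $Y$), averages over the $t$ classes, applies the generalized Sauer--Shelah bound to lower bound $\Dim_k(\mathcal{H}^i_{S',Y})$, and only at the end averages over the at most $\binom{n}{k}$ possible witness sets $Y_a$ to extract a single common $A$, paying a polynomial $n^{k}$ factor in the size of the shattered set. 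Your proposal supplies no argument for the concentration step, and proving it would essentially amount to reproving (a stronger form of) that lemma.

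There is also a quantitative problem with your calibration claim even under the most optimistic reading of the concentration step. Since $S^*$ must lie inside $\bigcup_{y\in A}T_y$ for a balanced partition, $|S^*|\leq km/n$, so the best you could hope for is $|\mathcal{H}_{S^*,A}|\approx k^{km/n}$; after dividing by $t=(1+k)^{m/n}$ and comparing against the $(k-1)^{|S^*|}$ term in the generalized Sauer--Shelah bound (applied with $|Y|=|A|=k$), you need $k\ln\frac{k}{k-1} > \ln(k+1)$, which holds for $k=2$ but fails for every $k\geq 3$. So for $k\geq 3$ your route yields no lower bound on $\Dim_k(\mathcal{H}^i_{S^*,A})$ at all, whereas the paper's order of operations --- Sauer--Shelah against projections onto all of $Y$ first, with the count $(1+k)^{\Theta(km/n)}$ beating $(k-1)^{|S'|}$, and fixing the small $A$ last by a second averaging --- is what makes the arithmetic go through.
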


\begin{proof} We use the following lemma from \cite{BuchfuhrerDFKMPSSU10}

\begin{lemma}
Assume $\mathcal{H} \subseteq P(X,Y)$ has the $(\frac{n}{1+2k})$-intersection property. There exists a set $S \subseteq X$ such that $|S| \geq km/n$ and $|\mathcal{H}_{S,Y}| \geq (1+k)^{km/n}$.
\end{lemma}
By this lemma, there is a set $S' \subseteq X$ such that $|S'| \geq km/n$ and $|\mathcal{H}_{S',Y}| \geq (k+1)^{km/n}$. By averaging argument, there exists $i \in [t]$ such that 
$$
|\mathcal{H}^i_{S',Y}| \geq \frac{1}{t} \cdot (k+1)^{km/n} = (k+1)^{-m/n} \cdot (k+1)^{km/n}$$ $$ \geq (k+1)^{m/n}.
$$
By generalized Sauer-Shelah Lemma, we have 
$$
(k+1)^{m/n} \leq m^{\Dim_k(\mathcal{H}^i_{S',Y})}\cdot  n^{k\Dim_k(\mathcal{H}^i_{S',Y})}\cdot (k-1)^{m/n} $$ $$\leq m^{2\Dim_k(\mathcal{H}^i_{S',Y})}\cdot (k-1)^{m/n}.
$$
Taking logarithms we have
\[
\Dim_k(\mathcal{H}^i_{S',Y}) \geq \frac{\log_2(\frac{k+1}{k-1})\frac{m}{n}}{2\log_2(m)} \geq \frac{m}{2n(k-1)\log_2(m)}.
\]
Therefore, there exists a set $S'' \subseteq S'$ of size $\frac{m}{2n(k-1)\log_2(m)}$ that is $k$-shattered by $Y$. By definition of $k$-shattering, let $\{Y_a\}_{a\in T}$(each $Y_a$ has size $k$) be a collection of subsets of $Y$ that indicates $S''$ is $k$-shattered. By averaging argument, since the total number of different $Y_a$ is at most $\binom{n}{k} \leq \frac{n^k}{(k-1)^2}$, there is a subset $S \subseteq S''$ of size $\frac{\frac{m}{n^{k+1}}}{\log_2(m)} \geq \frac{\frac{m}{m^{1-(k+1)\epsilon}}}{\log_2(m)} = \frac{m^{(k+1)\epsilon}}{\log_2(m)}$ such that for all $a\in S$, $Y_a$ are the same. Let this $Y_a$ to be $A$, we have that the pair $S$,$A$ is shattered by $\mathcal{H}^i_{S,A}$.

\end{proof}

\begin{observation}[\cite{DanielySS15}]
\label{ob:conint}
$\mathcal{H}$ has the $\alpha$ containment property if and only if the MIR mechanism with set $\mathcal{H}$ has an approximation ratio of $\alpha$ with respect to single minded valuations. $\mathcal{H}$ has the $\alpha$ intersection property if and only if the MIR mechanism with set $\mathcal{H}$ has an approximation ratio of $\alpha$ with respect to 0/1-additive valuations.\footnote{A valuation is $0/1$-additive if it is additive with all item values either $0$ or $1$.}
\end{observation}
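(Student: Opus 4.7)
The plan is to prove each equivalence by identifying the exact correspondence between welfare on the relevant simple valuation class and the combinatorial quantity appearing in the property. Two identities drive everything: for unit-value single-minded bidders with desired sets $\{D_y\}$, the welfare of an allocation $\{T_y\}$ equals $|\{y : D_y \subseteq T_y\}|$; for $0/1$-additive bidders whose interested item sets are $\{S_y\}$, the welfare of $\{T_y\}$ equals $\sum_y |S_y \cap T_y|$. Under these identities, the containment property's count is exactly the single-minded welfare, and the intersection property's sum is exactly the $0/1$-additive welfare, so each equivalence reduces to matching up maxima.

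For the forward direction of each equivalence (``property implies approximation ratio''), I take an arbitrary valuation profile in the relevant class and let $\{S_y^*\}$ be the OPT allocation (WLOG each satisfied single-minded bidder receives exactly $D_y$, keeping the collection disjoint). Applying the property to $\{S_y^*\}$ yields some $\{T_y\} \in \mathcal{H}$ whose welfare is at least $\mathrm{OPT}/\alpha$ via the identities above; the MIR mechanism selects the welfare-maximizer in $\mathcal{H}$, which can only do better. For the converse ``approximation implies property'' direction, given any target allocation $\{S_y\}$ for which we need to certify the property, I construct the adversarial unit-value single-minded (resp.\ $0/1$-additive) profile in which bidder $y$ desires $S_y$ (resp.\ values each item of $S_y$ at $1$ and others at $0$). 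The OPT welfare of this instance is realized by running the allocation $\{S_y\}$ itself and equals the relevant combinatorial count, so the $\alpha$-approximation guarantee forces an allocation in $\mathcal{H}$ whose welfare witnesses the required containment/intersection bound for $\{S_y\}$.

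The main obstacle is the single-minded case with non-uniform values, since the count-to-welfare identity aligns bidder counts with welfare only at unit values. I would handle this by arguing that the worst-case single-minded profile is already achieved at a unit-value configuration: an exchange argument on any minimizing profile lets one collapse onto a subset of bidders sharing a common value without improving the MIR-to-OPT ratio, so the unit-value analysis upper-bounds the general single-minded case and yields the $\alpha$-approximation in full generality. The $0/1$-additive direction avoids this obstacle entirely, since the intersection identity is already exact at per-item unit values and the argument transfers directly from OPT to the $\mathcal{H}$-allocation produced by the intersection property.
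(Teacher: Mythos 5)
This observation is not proved in the paper at all---it is imported verbatim from Daniely et al.~\cite{DanielySS15}---so there is no in-paper proof to compare against; I am judging your argument on its own terms. Your intersection/$0/1$-additive equivalence is correct in both directions: with per-item unit values the welfare identity $\sum_y |S_y\cap T_y|$ is exact, and your two constructions (apply the property to the OPT allocation restricted to interest sets; conversely, build the $0/1$-additive profile whose interest sets are the target allocation) are exactly what is needed. Likewise your ``approximation implies containment'' direction is fine: unit-value single-minded bidders with desired sets $S_y$ make OPT equal to $|\{y:S_y\neq\emptyset\}|$ and force the $\mathcal{H}$-maximizer to witness the containment bound. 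Notably, this is the only direction of the single-minded equivalence that the paper ever invokes (its corollaries go from an assumed approximation guarantee to the containment/intersection property and then to shattering).

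The gap is in your treatment of what you correctly flag as the main obstacle: the claim that an exchange argument collapses worst-case single-minded profiles to a common value, so that $\alpha$-containment yields an $\alpha$-approximation for single-minded bidders with \emph{arbitrary} values. That collapse is not just unproven; it is false, because containment is a counting guarantee and counting cannot control weighted welfare. Concretely, take $n$ bidders and $n$ items, let bidder $j$ be single-minded for $\{j\}$ with value $1/j$, and let $\mathcal{H}$ contain, for every allocation $\{S_y\}$, the sub-allocation that keeps $S_y$ unchanged for a chosen half of the bidders with $S_y\neq\emptyset$ and gives the rest nothing, where the kept half avoids bidders $y$ with $y\in S_y$ whenever possible and otherwise keeps only the largest-index such bidders. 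Every allocation is half-covered (in the $S_y\subseteq T_y$ sense) by its own trimmed copy, so $\mathcal{H}$ has the $2$-containment property, yet every member of $\mathcal{H}$ has welfare $O(1)$ on this profile while OPT is the harmonic sum $\Theta(\log n)$. So no value-collapsing exchange can preserve $\alpha$ (or even any constant); the usual reduction from general values to equal values by geometric bucketing inherently loses a $\Theta(\log)$ factor, matching this example. The statement should therefore be read, as in~\cite{DanielySS15} and as used in this paper, with respect to equal-value (unit-value) single-minded bidders, or else you should claim only the ``$\alpha$-approximation $\Rightarrow$ $\alpha$-containment'' direction; either repair suffices for every downstream use in the paper, but your proposal as written does not establish the single-minded ``only if'' direction for general values.
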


By putting Observation \ref{ob:num}, Observation \ref{ob:mir}, Observation \ref{ob:conint}, Theorem \ref{thm:containment} and Theorem \ref{thm:intersection} together, we get the following two Corollaries:

\begin{corollary}
For all $\delta > 0$, if an MIR interpolation mechanism has first round communication at most $m^{\delta}$ per bidder and guarantees  approximation ratio $m^{1/3 - 2\delta/3}$ when all buyers are single minded, then one of the MIR mechanisms selected for phase two maximizes over a set of allocations $\mathcal{H}^i$ that shatters  a pair of subsets $S \subseteq [m]$ and $A \subseteq [n]$ of sizes $\tilde{\Omega}(m^{1/2 + \delta/2})$ and $m^{1/6+\delta/6}$, respectively. 
\end{corollary}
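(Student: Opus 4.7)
The plan is to derive this corollary as a direct consequence of Theorem~\ref{thm:containment} combined with Observations~\ref{ob:num},~\ref{ob:mir}, and~\ref{ob:conint}, with parameters tuned so that everything lines up. Set $\epsilon := 2/3 + 2\delta/3$, so that $1-\epsilon = 1/3 - 2\delta/3$, $3\epsilon/4 = 1/2 + \delta/2$, and $\epsilon/4 = 1/6 + \delta/6$. Choose the number of bidders in the hard instance to be $n := m^{1/2 - \delta/2} = m^{1 - 3\epsilon/4}$, matching the hypothesis of Theorem~\ref{thm:containment}.

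First I would use Observation~\ref{ob:num} to bound the number of distinct phase-two mechanisms the interpolation mechanism can commit to by $2^{n \cdot m^\delta} = 2^{m^{1/2 + \delta/2}} = 2^{m^{3\epsilon/4}}$. Enumerating these mechanisms as $\mathcal{H}^1, \ldots, \mathcal{H}^t$ (all MIR, by assumption that the interpolation mechanism is MIR), Observation~\ref{ob:mir} implies that the approximation guarantee of the interpolation mechanism on single-minded bidders is at least as strong as that of the single MIR mechanism over $\mathcal{H} := \bigcup_{i=1}^t \mathcal{H}^i$. In particular this combined mechanism obtains an $m^{1/3 - 2\delta/3}$-approximation on single-minded profiles, which by Observation~\ref{ob:conint} is equivalent to $\mathcal{H}$ having the $m^{1-\epsilon}$-containment property.

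Next I would apply Theorem~\ref{thm:containment} to the family $\mathcal{H}^1, \ldots, \mathcal{H}^t$. The theorem yields some index $i$ and a shattered pair $(S, A) \subseteq [m] \times [n]$ of sizes $\tilde{\Omega}(m^{3\epsilon/4}) = \tilde{\Omega}(m^{1/2 + \delta/2})$ and $m^{\epsilon/4} = m^{1/6 + \delta/6}$, which is exactly the claimed conclusion.

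The only point requiring a small check is that Theorem~\ref{thm:containment} is stated for $t = e^{m^{3\epsilon/4}/4}$, while we have $t \le 2^{m^{3\epsilon/4}} = e^{(\ln 2)\, m^{3\epsilon/4}}$, which is slightly larger. Inspecting the proof, the decisive inequality compares $\ln(1+1/(k-1)) \cdot m^\epsilon$ with $\ln t$; using the sharper estimate $\ln(1+1/(k-1)) \ge 1/k - 1/(2k^2)$ in place of $1/(2k)$, the argument tolerates any $t$ with $\ln t \le (1-o(1))\, m^{3\epsilon/4}$, and since $\ln 2 < 1$ this accommodates our bound on $t$ for all sufficiently large $m$, merely weakening the constant hidden inside $\tilde{\Omega}(\cdot)$. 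This mild reconciliation of constants is the only real obstacle, and it is structural rather than deep: everything else is bookkeeping of parameters.
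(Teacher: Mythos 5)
Your proposal is correct and follows essentially the same route as the paper, which obtains this corollary by chaining Observations~\ref{ob:num}, \ref{ob:mir}, and \ref{ob:conint} with Theorem~\ref{thm:containment} under exactly your parameter substitution $\epsilon = 2/3 + 2\delta/3$, $n = m^{1-3\epsilon/4}$. Your explicit reconciliation of the bound $t \le 2^{m^{3\epsilon/4}}$ with the theorem's stated hypothesis $t = e^{m^{3\epsilon/4}/4}$ (observing that the proof only needs $\ln t \le (1-c)\,m^{3\epsilon/4}$ for some constant $c>0$, which $\ln 2 < 1$ supplies) is a detail the paper glosses over, and you handle it correctly.
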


\begin{corollary}
For all $\delta > 0$, if a MIR interpolation mechanism has first round communication at most  $m^{1/3}$ per bidder and guarantees approximation ratio $\frac{m^{1/3 -\delta}}{5}$ when all buyers are 0/1-additive, then one of the MIR mechanisms selected for phase two maximizes over a set of allocations $\mathcal{H}^i$ that shatters  a pair of subsets $S \subseteq [m]$ and $A \subseteq [n]$ of sizes $\frac{m^{3\delta}}{\log_2(m)}$ and $2$.
\end{corollary}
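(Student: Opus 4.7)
The plan is to chain together the three observations and the intersection-property theorem from the appendix in a direct composition. Setting $k=2$ and $\epsilon = \delta$ in Theorem~\ref{thm:intersection} lines things up perfectly: the ``bad'' intersection-property constant becomes $m^{1/(k+1)-\epsilon}/(1+2k) = m^{1/3-\delta}/5$, exactly the approximation ratio appearing in the corollary, and the resulting shattered pair has sizes $m^{(k+1)\epsilon}/\log_2 m = m^{3\delta}/\log_2 m$ and $k = 2$. I would fix the number of players at $n = m^{1/3-\delta}$ to match the hypothesis $n = m^{1/(k+1)-\epsilon}$.

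Next I would count the possible phase-two mechanisms. By Observation~\ref{ob:num}, first-round communication of at most $m^{1/3}$ per bidder produces at most $2^{n \cdot m^{1/3}} = 2^{m^{2/3-\delta}}$ different MIR mechanisms, with allocation collections $\mathcal{H}^1,\dots,\mathcal{H}^t$. A routine check confirms this count sits under the cap $t = (k+1)^{m/n} = 3^{m^{2/3+\delta}}$ required by Theorem~\ref{thm:intersection}, since $m^{2/3-\delta} \le \log_2(3)\cdot m^{2/3+\delta}$ for all $m \ge 1$. This is the one numerical inequality that really needs verifying.

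Then I would use Observation~\ref{ob:mir} to replace the interpolation mechanism by the single MIR mechanism over $\mathcal{H} = \bigcup_i \mathcal{H}^i$: the interpolation mechanism is never better than this combined MIR mechanism, so the assumed $m^{1/3-\delta}/5$-approximation on $0/1$-additive profiles transfers to the MIR mechanism over $\mathcal{H}$. Observation~\ref{ob:conint} then converts this into the $m^{1/3-\delta}/5$-intersection property for $\mathcal{H}$, which is exactly the hypothesis of Theorem~\ref{thm:intersection}. Applying that theorem yields some $i \in [t]$ such that $\mathcal{H}^i$ shatters a pair $(S, A) \subseteq [m] \times [n]$ of the advertised sizes, closing the proof.

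The main obstacle is not in any individual step but in lining up all the parameterizations consistently, in particular making sure the first-round communication budget $m^{1/3}$ per bidder stays under the ceiling $(k+1)^{m/n}$ demanded by Theorem~\ref{thm:intersection} in the $k=2$ regime. Once that single inequality is checked, the three observations chain together essentially mechanically.
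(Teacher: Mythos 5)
Your proposal matches the paper's argument: the corollary is obtained exactly by instantiating Theorem~\ref{thm:intersection} with $k=2$, $\epsilon=\delta$, $n=m^{1/3-\delta}$, and chaining Observations~\ref{ob:num}, \ref{ob:mir}, and \ref{ob:conint} to pass from the interpolation mechanism to the intersection property of $\mathcal{H}=\cup_i\mathcal{H}^i$. Your explicit check that $2^{n\cdot m^{1/3}}=2^{m^{2/3-\delta}}\le 3^{m^{2/3+\delta}}$ is a detail the paper leaves implicit, and it is verified correctly.
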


Finally, these two corollaries provide the proof of Theorem~\ref{thm:MIR}.

\begin{prevproof}{Theorem}{thm:MIR}
First, it is clear that both capped-additive and submodular valuations contain 0/1-additive valuations. At a high level, the fact that $\mathcal{H}^i$ shatters a large pair of subsets means that any mechanism that optimizes over $\mathcal{H}^i$ can solve instances over $S, A$ \emph{exactly}. If it is computationally hard (or impossible with low communication) to optimize over sets of the corresponding sizes exactly, then it must be the case that the MIR mechanism optimizing over $\mathcal{H}^i$ isn't computationally efficient (or communication efficient). 

To prove the three bullets formally, one just needs to encode problems that are NP-hard to solve (or require exponential communication) into welfare maximization for the appropriate valuation classes. Daniely et. al. provides such reductions for each of the three problems, so we use exactly the same reductions as them. We refer the reader to~\cite{DanielySS15} for further detail.
\end{prevproof}

\section{Omitted Proofs from Section~\ref{sec:value}}\label{app:general}

In this section, we consider the computationally efficient and value query interpolation mechanisms. In \cite{Dobzinski11},\cite{DobzinskiV12}, they showed that if such a truthful mechanism has a good approximation ratio with respect to submodular valuations, then its query complexity and computational complexity will be large. We will generalize their techniques to give lower bounds in our setting.

\textbf{Proof overview:}  Similarly as the proof in the previous section, we will first show that if the first round communication is small, then the number of truthful mechanisms that might be run in phase two is bounded. Then we will show that one of these truthful mechanism has a large structured submenu on a random polar additive valuation with constant probability. The structured submenu will be defined later. Finally we will use \cite{Dobzinski11} and \cite{DobzinskiV12}'s results to show that a large structured submenu implies large query complexity and computational complexity.

\subsection{Notations}
For any truthful mechanism $A$, we use $A(v)_i$ to denote the set $A$ gives to bidder $i$ on valuation $v$ and we use $p^A_{v_{-i}}(S)$ denote the price $A$ charges bidder $i$ for getting item set $S$ on valuation $v_{-i}$. Notice the reason that we can define the prices in this way is because of the``taxation principle" for truthful mechanisms: $p^A_{v_{-i}}(S)$ does not depend on bidder $i$'s own declared valuation. 
\begin{definition}[\cite{Dobzinski11}]
(structured submenu) A collection of sets $\mathcal{S}(A, v_{-i}, k,p)$ is called a structured submenu if 
\begin{enumerate}
\item $\mathcal{S}(A, v_{-i}, k,p) \subseteq \{S|\exists v_i, A(v_i,v_{-i})_i = S\}$.
\item For all $S \in \mathcal{S}(A, v_{-i}, k,p)$, $|S| = k$. 
\item For all $S \in \mathcal{S}(A, v_{-i}, k,p)$, $p-\frac{1}{m^5} < p^A_{v_{-i}}(S) \leq p$.
\item For all $T \in \{S|\exists v_i, A(v_i,v_{-i})_i = S\}$, $S \in \mathcal{S}(A, v_{-i}, k,p)$ such that $T$ strictly contains $S$, $p^A_{v_{-i}}(S) + \frac{1}{m^3}  \leq p^A_{v_{-i}}(T)$.
\end{enumerate}
\end{definition}

\begin{definition}[\cite{Dobzinski11}]
(polar additive valuations) A valuation $v$ is called polar additive if 
\begin{enumerate}
\item $v$ is additive. 
\item For each item $j$ either $v(\{j\}) = 1$ or $v(\{j\}) = 1/m^3$. 
\end{enumerate}
\end{definition}

\subsection{Proof}
\cite{Dobzinski11,DobzinskiV12} prove that any approximately optimal truthful mechanism has a large structured submenu. We first prove the following lemma to show that if a protocol with first-round communication achieves a good approximation, we can still find a large structured submenu. 
\begin{lemma}
\label{lem:largemenu}
Let $m \geq n^2$. Let $A$ be an interpolation mechanism with first round communication $x$ per bidder. that guarantees approximation ratio $\frac{n}{20}$ whenever buyers are submodular. Let $t = 2^{nx}$, and let the $t$ truthful mechanisms that might be selected by phase one be $A_1$, ...,$A_t$. Let $v$ be a polar additive valuation randomly drawn from a distribution described in the proof. Then with probability $1 - e^{-\frac{m}{1000}} - t\cdot \frac{e^{-\frac{2m}{n^2}}}{20n}$, there exists $i \in \{1,...,n\}$, $l \in \{1,...,t\}$, $k \in \{1,...,m\}$,  $p \in [0,m]$, with $p$ a multiple of $1/m^5$, and a structured submenu $\mathcal{S}(A_l, v_{-i}, k,p)$, such that $|\mathcal{S}(A_l, v_{-i}, k,p)| \geq \frac{e^{\frac{m}{n^2}}}{10n^2\cdot m^7}$.
\end{lemma}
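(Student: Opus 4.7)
The plan is to adapt the large-menu argument of Dobzinski and Vondrak~\cite{Dobzinski11,DobzinskiV12} to the interpolation setting via a union bound over the $t$ candidate phase-two mechanisms. The distribution is the standard random polar-additive profile: independently for each bidder $i$ and each item $j$, set $v_i(\{j\}) = 1$ with probability $1/n$ and $v_i(\{j\}) = 1/m^3$ otherwise. A Chernoff bound on the number of items covered by at least one high bidder shows that the optimal welfare exceeds $m/2$ except with probability $e^{-m/1000}$, producing the first failure term. On the complementary event, the assumed $n/20$-approximation forces whichever mechanism $A_\ell$ is selected in phase one to achieve welfare at least $10m/n$, so by averaging some bidder $i$ receives a bundle from $A_\ell$ whose value under $v_i$ is at least $10m/n^2$. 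Since values lie in $\{1, 1/m^3\}$ and $m \geq n^2$, that bundle must actually contain $\sim 10m/n^2$ of bidder $i$'s high items.

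The core of the proof is to show, for each \emph{fixed} $A_\ell$ and \emph{fixed} bidder $i$, that on the event ``$v_i(A_\ell(v)_i) \geq 10m/n^2$'' the menu $\mathcal{M}_i(v_{-i}) := \{S : \exists v_i,\ A_\ell(v_i, v_{-i})_i = S\}$ contains a structured submenu of size $N := e^{m/n^2}/(10 n^2 m^7)$ at some $(k,p)$, except with probability at most $e^{-2m/n^2}/(20 n^2)$; the lemma's stated second failure term then follows by a union bound over the $n$ bidders and the $t$ candidate mechanisms. To this end, discretize prices in $[0,m]$ into width-$1/m^5$ windows and bucket $\mathcal{M}_i(v_{-i})$ by the pair $(|S|,\, p^{A_\ell}_{v_{-i}}(S))$, giving at most $m \cdot m^6 = m^7$ buckets. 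Within each bucket, iteratively delete any $S$ that has a strict superset $T \in \mathcal{M}_i(v_{-i})$ with $p^{A_\ell}_{v_{-i}}(T) < p^{A_\ell}_{v_{-i}}(S) + 1/m^3$; what remains in each bucket is, by construction, a structured submenu for the corresponding $(k,p)$. If no bucket's surviving set reaches size $N$, then at most $N \cdot m^7 = e^{m/n^2}/(10 n^2)$ bundles survive pruning in $\mathcal{M}_i(v_{-i})$.

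For each surviving bundle $S$, a multiplicative Chernoff bound on $\mathrm{Bin}(|S|, 1/n)$---the number of bidder $i$'s high items that land in $S$---shows that $v_i(S) \geq 10m/n^2$ holds with probability at most $e^{-3m/n^2}$ (splitting into the small-$|S|$ regime where the event is outright impossible and the larger-$|S|$ regime where the multiplicative Chernoff tail applies). Summing over the at most $e^{m/n^2}/(10 n^2)$ surviving bundles yields the desired per-mechanism-per-bidder failure bound, and the two outer union bounds close the argument. The main obstacle is precisely this per-bundle estimate: the event ``$A_\ell$ actually assigns $S$ to bidder $i$'' couples $v_i$'s preferences across the \emph{entire} menu, and cannot naively be replaced by the marginal event $v_i(S) \geq 10m/n^2$. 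The structured-submenu definition---in particular the $+1/m^3$ price-gap clause forbidding cheap strict supersets---is engineered precisely so that after the pruning step, the good-outcome event is covered by the union over the surviving bundles; this coupling-to-marginal reduction is the place where the techniques of~\cite{Dobzinski11,DobzinskiV12} are invoked directly.
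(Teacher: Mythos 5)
Your high-level blueprint matches the paper's: random polar additive valuations, at most $t=2^{nx}$ candidate phase-two mechanisms, a Chernoff bound for the optimum, a union bound over bidders and mechanisms over a family of bundles that is measurable with respect to $v_{-i}$ (so that fresh $v_i$-randomness can be applied), and a pigeonhole over the $m\cdot m^6$ size/price buckets to extract the structured submenu. But the core quantitative step is wrong as stated. You reduce the approximation guarantee to the event ``some bidder $i$ gets a bundle with $v_i(S)\geq 10m/n^2$'' and then claim that for each fixed surviving bundle $S$, $\Prob[v_i(S)\geq 10m/n^2]\leq e^{-3m/n^2}$. This fails badly for large bundles: $v_i(S)$ concentrates around $|S|/n$, so for any $S$ with $|S|=\Omega(m/n)$ (and a truthful mechanism's menu can certainly contain many such bundles --- an allocation gives average bundle size $m/n$) the event $v_i(S)\geq 10m/n^2$ has constant or near-one probability, and your union bound over up to $e^{m/n^2}/(10n^2)$ surviving bundles gives nothing. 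Your parenthetical split into ``small $|S|$: impossible'' and ``large $|S|$: Chernoff'' does not cover the regime $|S|\gtrsim m/n$, which is exactly the dangerous one. The paper avoids this by using a \emph{size-dependent} threshold: it bounds, for every menu bundle, $\Prob\bigl[v_i(S) > \max\{4|S|/n,\,4m/n^2\} + 1/m^2\bigr] \leq e^{-3m/n^2}$ (a factor-$4$ multiplicative deviation, valid for all sizes), and derives the contradiction at the level of \emph{total} welfare: if every bidder's bundle obeys its size-dependent bound, then since bundle sizes sum to at most $m$, the welfare of every candidate $A_l$ is at most roughly $10m/n$, contradicting the $n/20$ guarantee against an optimum of $\Omega(m)$. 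Your fixed-threshold, single-bidder reduction cannot be patched without essentially switching to this size-dependent accounting.

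Two secondary points. First, you leave the coupling step (``the bundle actually awarded to a polar additive bidder is covered by the union over surviving bundles'') to a citation, but it needs, and has, a short direct argument: by the taxation principle the awarded bundle maximizes $v_i(S)-p(S)$ over the menu, and since every item is worth at least $1/m^3$ to a polar additive bidder, any strict superset priced less than $1/m^3$ above it would be strictly preferred; this is exactly why the awarded bundle has no cheap strict superset. The paper in fact sidesteps your pruning step entirely by defining the menu $\mathcal{S}^{A_l}_{v_{-i}}$ as only those bundles reachable by \emph{polar additive} reports of bidder $i$, for which the $1/m^3$ price-gap property (condition 4 of the structured submenu) holds automatically, so the pigeonholed bucket is already a structured submenu. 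Second, it also restricts the union bound to the lexicographically smallest $e^{m/n^2}/(10n^2)$ menu bundles so the bound holds whether or not the menu is small; these are presentation choices, not gaps, but the size-dependent threshold is essential.
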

\begin{proof}
We pick a random polar additive valuation $v$ as follows: for each item $j$ and bidder $i$, $v_i(\{j\}) = 1$ with probability $\frac{1}{n}$ and otherwise $v_i(\{j\}) = \frac{1}{m^3}$. 

For each truthful mechanism $A_l$, define 
\[
\mathcal{S}^{A_l}_{v_{-i}} = \{S|\exists \text{ polar additive } v_i, A_l(v_i,v_{-i})_i = S\}.
\]
Then for all $S \in \mathcal{S}^{A_l}_{v_{-i}}$, we have $p_{v_{-i}}(S) \leq m$. This is because otherwise bidder $i$ will prefer the empty bundle. For all $S \in \mathcal{S}^{A_l}_{v_{-i}}$, $T \in \{S|\exists v_i, A_l(v_i,v_{-i})_i = S\}$ and $T$ strictly contains $S$, then $p^{A_l}_{v_{-i}}(S) + \frac{1}{m^3}  \leq p^{A_l}_{v_{-i}}(T)$. This is because each item in a polar additive valuation has value at least $\frac{1}{m^3}$. Thus if the price difference between $S$ and $T$ is less than $\frac{1}{m^3}$ then bidder $i$ cannot be truthful. 

Now let's consider the following three events and prove they cannot happen at the same time:
\begin{enumerate}
\item $D \geq (1-1.1(1-\frac{1}{n})^n)m$ where $D = |\{j|\exists i, v_i(\{j\}) = 1\}|$.  
\item For all $i$ and $l$, the lexicographically small $\frac{e^{\frac{m}{n^2}}}{10n^2}$ bundles in $\mathcal{S}^{A_l}_{v_{-i}}$ satisfy $v_i(S) \leq \max\{\frac{4|S|}{n} + \frac{1}{m^2}, \frac{4m}{n^2} + \frac{1}{m^2}\}$. 
\item For all $i$ and $l$, $|\mathcal{S}^{A_l}_{v_{-i}}| \leq \frac{e^{\frac{m}{n^2}}}{10n^2}$.
\end{enumerate}

Now let's see why these events cannot happen at the same time. When the first event happens, we know that the optimal assignment will make the social welfare to be at least $m(1-\frac{1.1}{e})$. The second and third event together says that for all $l$,$i$ and $S \in \mathcal{S}^{A_l}_{v_{-i}}$, $v_i(S) \leq \max\{\frac{4|S|}{n} + \frac{1}{m^2}, \frac{4m}{n^2} + \frac{1}{m^2}\}$. Therefore, for each $A_l$, consider the output of $A_l$ as $A_l(v)_1,...,A_l(v)_t$. The social welfare of $A_l$ is 
\[
\sum_i v_i(A_l(v)_i) \leq\sum_i \max\{\frac{4|A_l(v)_i|}{n} + \frac{1}{m^2}, \frac{4m}{n^2} + \frac{1}{m^2}\}\]\[ \leq  (\frac{4m}{n^2} + \frac{1}{m^2}) \cdot n +\sum_{|A_l(v)_i| > n/m} (\frac{4|A_l(v)_i|}{n} + \frac{1}{m^2} ) \leq \frac{10m}{n}
\]
Therefore the approximation ratio of $A_l$ on this valuation $v$ is at least $(m(1-\frac{1.1}{e}) )/ (\frac{10m}{n}) \geq \frac{n}{20}$. Since this is true for all $A_l$, it contradicts with the fact that $A$ has approximation ratio $\frac{n}{20}$ with respect to submodular valuations. 

Now let's analyze the probability that both of the first two events to happen. For the first event, for each item $j$, the probability that $j$ is not in $\{j|\exists i, v_i(\{j\} = 1\}$ is $(1-\frac{1}{n})^n$. Therefore, by Chernoff Bound,
\[
Pr[D < (1-1.1(1-\frac{1}{n})^n)m] \leq e^{-\frac{0.1^2 (1-\frac{1}{n})^n m}{3}} < e^{-\frac{m}{1000}}.
\]
For second event, fix $S$ and $i$, let $D_{i,S} =  |\{j| j\in S, v_i(\{j\} = 1\}|$. By Chernoff Bound,
\[
Pr[D_{i,S} \geq \frac{4}{n} |S|]  \leq e^{-\frac{3^2\cdot |S|}{3n}} = e^{-\frac{3|S|}{n}}.
\]
It is easy to see that if $v_i(S) \geq \frac{4}{n}|S| + \frac{1}{m^2}$, then $D_{i,S} \geq \frac{4}{n}|S|$. If $|S| \geq m/n$, we have with probability at most $e^{-3m/n^2}$,  $v_i(S) \geq \frac{4|S|}{n} + \frac{1}{m^2}$. If $|S| < m/n$, then we extend $S$ to $T$ such that $|T| = m/n$ by adding items. We know that $Pr[D_{i,T} \geq \frac{4}{n} |T|] \leq e^{-3m/n^2}$ and $v_i(T) \geq v_i(S)$. Therefore with probability at most $e^{-3m/n^2}$,   $v_i(S) \geq \frac{4m}{n^2} + \frac{1}{m^2}$. So for any $S$, 
\[
Pr[v_i(S) > \max(\frac{4|S|}{n} + \frac{1}{m^2}, \frac{4m}{n^2} + \frac{1}{m^2})] < e^{-3m/n^2}.
\]
Thus by Union Bound, the probability that both of the first two events to happen is at least
\[
1- e^{-\frac{m}{1000}} - t \cdot n \cdot \frac{e^{\frac{m}{n^2}}}{10n^2} \cdot e^{-3m/n^2} = 1 - e^{-\frac{m}{1000}} - t\cdot \frac{e^{-\frac{2m}{n^2}}}{10n}.
\]
Therefore the third event happens with probability at most $e^{-\frac{m}{1000}} + t\cdot \frac{e^{-\frac{2m}{n^2}}}{10n}$. So with probability $1 - e^{-\frac{m}{1000}} - t\cdot \frac{e^{-\frac{2m}{n^2}}}{10n}$, there exists $i$ and $l$ such that $\mathcal{S}^{A_l}_{v_{-i}} \geq \frac{e^{\frac{m}{n^2}}}{10n^2}$. Now fix this $ \mathcal{S}^{A_l}_{v_{-i}}$. For each $S \in \mathcal{S}^{A_l}_{v_{-i}}$, we put it into bin $(k,p)$, if $|S| = k$ and $p-\frac{1}{m^5} < p^{A_l}_{v_{-i}}(S) \leq p$. There are only $m$ choices for $k$ and $m^6$ choices for $p$. So there are $m^7$ bins. Therefore, one bin contains at least $\frac{e^{\frac{m}{n^2}}}{10n^2m^7}$ elements. And from the definition of the structured submenu, we know that this bin is a structure submenu of size at least $\frac{e^{\frac{m}{n^2}}}{10n^2m^7}$. 
\end{proof}

Then we are going to use the following lemma from \cite{Dobzinski11} which shows that a large structured submenu implies large query complexity of the mechanism.
\begin{lemma}[\cite{Dobzinski11}]
\label{lem:querycom}
Let $A$ be a truthful mechanism for combinatorial auctions with submodular bidders. Let $\mathcal{S}$ be a structured submenu. Then, the number of value queries $A$ makes is at least $|\mathcal{S}| -1$. 
\end{lemma}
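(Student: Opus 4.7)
The plan is an adversary argument bounding query complexity by $|\mathcal{S}| - 1$: construct a family of witness valuations $\{v_S\}_{S \in \mathcal{S}}$, each producing $S$ as the allocation to bidder $i$, and show that an adversary can answer queries in such a way that each answer eliminates at most one witness from being consistent with the transcript. When fewer than $|\mathcal{S}| - 1$ queries have been made, at least two witnesses remain consistent, so the mechanism's transcript-determined output coincides on both, contradicting truthfulness.

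First, property 1 of the structured submenu guarantees that for each $S \in \mathcal{S}$ there is a submodular valuation $v_S$ with $A(v_S, v_{-i})_i = S$. I would strengthen this to obtain a family of witnesses that are simple and ``close together,'' leveraging the remaining properties: uniform size $k$ (property 2), prices clustered inside $(p - 1/m^5, p]$ (property 3), and the $1/m^3$ gap to strict supersets (property 4). A natural candidate is $v_S(T) = f(T) + \eta_S(T)$, where $f$ is a fixed monotone submodular base and $\eta_S$ is a tiny submodular perturbation designed to tip utility toward $S$ without disturbing submodularity. The $1/m^3$ margin in property 4 is exactly what allows such a small perturbation to override alternative menu bundles and force the mechanism to output $S$.

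Given this family, the adversary maintains a set $\mathcal{L} \subseteq \mathcal{S}$ of live witnesses still consistent with the transcript, initialized to $\mathcal{S}$. Whenever the algorithm queries $v_i(T)$, the adversary inspects $\{v_S(T) : S \in \mathcal{L}\}$, answers with a value realized by at least $|\mathcal{L}| - 1$ witnesses, and removes from $\mathcal{L}$ the (at most one) witness whose value disagrees. The technical heart of the proof is verifying that a ``majority'' answer of this strength always exists, i.e.\ that the construction of $\{v_S\}$ really does make all witnesses agree on each query except at most one. This is where submodularity together with the tight size/price margins of the structured submenu does the work, ensuring the perturbations $\eta_S$ cannot be individually detected except via one specific query per witness.

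If the mechanism halts after $q < |\mathcal{S}| - 1$ queries, then $|\mathcal{L}| \geq 2$ at termination; pick any distinct $S_1, S_2 \in \mathcal{L}$, and the mechanism's transcript-determined output must coincide on both $v_{S_1}$ and $v_{S_2}$, contradicting $A(v_{S_j}, v_{-i})_i = S_j$ with $S_1 \neq S_2$. Thus $q \geq |\mathcal{S}| - 1$. The main obstacle is the witness construction in the second paragraph: producing genuinely submodular valuations whose value-query signatures differ pairwise in essentially one ``coordinate'' at a time. Submodularity forbids truly localized spikes, so the construction must exploit the rigid size/price structure of the submenu and the $1/m^3$ buffer to engineer perturbations that are coarse enough to preserve submodularity yet fine enough to remain invisible to all but a single query per pair of witnesses.
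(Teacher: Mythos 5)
First, note that the paper you are working from does not actually prove this lemma --- it imports it verbatim from Dobzinski~\cite{Dobzinski11} --- so the only question is whether your argument stands on its own. Your high-level skeleton (a deterministic truthful mechanism must produce different transcripts on witness valuations $v_{S_1},v_{S_2}$ whose outputs are the distinct bundles $S_1\neq S_2$, hence an adversary that loses at most one witness per query forces at least $|\mathcal{S}|-1$ queries) is fine. The problem is that everything hinges on the witness family, and the construction you gesture at is not established and, as sketched, runs into two concrete obstructions.

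(a) Forcing the output to be $S$. Property 1 of a structured submenu only says that \emph{some} valuation obtains $S$; it does not say that a tiny perturbation of one \emph{fixed} base $f$ obtains $S$. For a truthful mechanism, $A(v_S,v_{-i})_i=S$ requires $S$ to maximize $v_S(T)-p^A_{v_{-i}}(T)$ over the \emph{entire} menu, and properties 2--4 constrain only the sizes and prices of submenu bundles and the prices of their \emph{strict supersets}. They say nothing about other menu bundles (small bundles, partially overlapping bundles, the empty set), whose prices are unconstrained and may give the bidder utility under $f$ exceeding $f(S)-p$ by far more than any perturbation of size $O(1/m^3)$. So your claim that ``the $1/m^3$ margin in property 4 is exactly what allows such a small perturbation to override alternative menu bundles'' is not justified by the stated properties. (b) Submodularity versus locality. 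A perturbation that rewards containing a specific size-$k$ set --- a bonus on the up-set of $S$, or a spike at the single bundle $S$ on top of a base that is tight at size $k$ --- is supermodular-flavored and breaks submodularity precisely in the regime the submenu forces you into (all witnesses tied at size $k$, prices within $1/m^5$). Conversely, the perturbations that are genuinely submodular and do favor $S$ (e.g.\ an additive bonus proportional to $|T\cap S|$) are visible to essentially every query that meets $S$, so a single query such as a singleton $\{j\}$ or a set straddling many submenu bundles can separate a large fraction of the witnesses at once, destroying the ``at most one elimination per query'' invariant that your counting needs; the same failure occurs if you try to use the polar-additive witnesses guaranteed by property 1 directly. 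You acknowledge this tension yourself, but resolving it \emph{is} the lemma: without an explicit family $\{v_S\}$ that is simultaneously submodular, forces output $S$, and agrees on every possible query set except at most one witness, the argument does not go through. Dobzinski's actual proof exploits the structured-submenu properties in a more delicate way rather than via a generic ``base plus invisible perturbation'' family, so as written your proposal has a genuine gap at its central step.
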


\begin{prevproof}{Theorem}{thm:valuequeries}
Similarly to observations in the previous section, phase one is choosing from at most $t=2^{n\times m^{\delta}}$ truthful mechanisms $A_1,...,A_t$. By Lemma \ref{lem:largemenu}, if we choose a random polar additive valuation, with probability $1 - e^{-\frac{m}{1000}} - t\cdot \frac{e^{-\frac{2m}{n^2}}}{20n} \geq 1 -e^{-\frac{m}{1000}} -\frac{1}{20n} > 0$ , we will get a large structured submenu for some truthful mechanism $A_l$. And by Lemma \ref{lem:querycom}, we know that $A_l$ has to make at least $\frac{e^{\frac{m}{n^2}}}{10n^2m^7}- 1 \geq \frac{e^{m^{1/3}}}{10m^8} - 1$ queries. This proves the first bullet. 

In \cite{DobzinskiV12}, their Theorem 2.1 shows that if we can get the large structured submenu of some truthful mechanism with constant probability on a random polar additive valuation, then the truthful mechanism is computationally inefficient unless $NP = RP$. Indeed, we have already showed that with constant probability we will get a large structured submenu. Using the same technique of Theorem 2.1 \cite{DobzinskiV12}, we can directly get the second bullet as well. We refer the reader to~\cite{DobzinskiV12} for further details.
\end{prevproof}

\end{document}